\title{Nearly-Tight Bounds for Flow Sparsifiers in Quasi-Bipartite Graphs}
\author{Syamantak Das}{IIIT Delhi, India}{syamantak@iiitd.ac.in}{https://orcid.org/0000-0002-4393-8678}{}
\author{Nikhil Kumar}{University of Waterloo, Canada }{nikhil.kumar2@uwaterloo.ca}{https://orcid.org/0000-0001-8634-6237}{}
\author{Daniel Vaz}%
    {LAMSADE, CNRS, Université Paris-Dauphine, Université PSL, France}%
    {daniel.ramos-vaz@dauphine.psl.eu}%
    {https://orcid.org/0000-0003-2224-2185}%
    {This work was funded by grants ANR-19-CE48-0016 (Algoridam) and ANR-21-CE48-0022 (S-EX-AP-PE-AL) from the French National Research Agency (ANR), and partially done while the author was at École Normale Superieure and IRIF. }
\authorrunning{S. Das, N. Kumar, and D. Vaz}
\keywords{Graph Sparsification, Cut Sparsifiers, Flow Sparsifiers, Quasi-bipartite Graphs, Bounded Treewidth} 
\DeclareMathOperator{\mc}{mc}
\newcommand{\bfd}{\ensuremath{\mathbf d}\xspace}
\newcommand{\bfdt}{\ensuremath{\mathbf {\tilde d}}\xspace}
\newcommand{\dt}{\ensuremath{{\tilde d}}\xspace}
\begin{document}

\maketitle

\begin{abstract}
Flow sparsification is a classic graph compression technique which, given a capacitated graph $G$ on $k$ terminals, aims to construct another capacitated graph $H$, called a \emph{flow sparsifier}, that preserves, either exactly or approximately, every \emph{multicommodity flow} between terminals (ideally, with size as a small function of $k$). 
Cut sparsifiers are a restricted variant of flow sparsifiers which are only required to preserve maximum flows between bipartitions of the terminal set. 
It is known that exact cut sparsifiers require $2^{\Omega(k)}$ many vertices [Krauthgamer and Rika, SODA 2013],
with the hard instances being \emph{quasi-bipartite} graphs, {where there are no edges between non-terminals}. 
On the other hand, it has been shown recently that exact (or even $(1+\varepsilon)$-approximate) flow sparsifiers on networks with just 6 terminals require unbounded size [Krauthgamer and Mosenzon, SODA 2023, Chen and Tan, SODA 2024]. 

In this paper, we construct exact flow sparsifiers of size $3^{k^{3}}$ and exact cut sparsifiers of size $2^{k^2}$ for quasi-bipartite graphs. 
In particular, the flow sparsifiers are contraction-based, that is, they are obtained from the input graph by (vertex) contraction operations. 
Our main contribution is a new technique to construct sparsifiers that exploits connections to polyhedral geometry, and that can be generalized to graphs with a small separator that separates the graph into small components.
We also give an improved reduction theorem for graphs of bounded treewidth~[Andoni et al., SODA 2011], implying a flow sparsifier of size $\bigO(k\cdot w)$ and quality $\bigO\paren[\big]{\frac{\log w}{\log \log w}}$, where $w$ is the treewidth.
\end{abstract}

\section{Introduction}
Graph sparsification is a classic and influential technique in algorithm design. The idea behind graph sparsification is to compress a given graph into a ``smaller'' graph (the notion of small depends on the context) which preserves certain crucial properties of the graph. 
{The notion of edge sparsification dates back to the work of Gomory and Hu~\cite{GomoryH61} and to Nagamochi and Ibaraki~\cite{NagamochiI92}, who developed techniques to find a sparser graph -- that is, with fewer edges -- preserving $s$-$t$-cuts and $k$-edge-connectivity, respectively. %
This work was continued by Benczur and Karger~\cite{BenczurK00} and Spielman and Teng~\cite{SpielmanT04/stoc}, who extended the techniques to preserving cut values, or more generally the Laplacian spectrum, up to a factor of $1+\varepsilon$.}

Relatively recent is the study of vertex sparsification. Arguably, the most extensively explored notions here are flow sparsification and cut sparsification. 
Specifically, suppose we are given an undirected graph $G=(V,E)$ along with a capacity function $u$ on the edges and a subset of vertices $T$ called terminals, with $|T| = k$. 
A cut sparsifier $H$ of quality $q\geq 1$ is a graph on (potentially) fewer vertices which preserves the minimum cut value between every possible bipartition of the terminal set, up to a factor of $q$. 
A more general notion is flow sparsification where the sparsifier must preserve all multicommodity flows between the terminals (formal definitions are introduced in \Cref{sec:prelim}).
Hence, we can see cut sparsification as a special case where it is only required to preserve the single-commodity flows between bipartitions of terminals. 

The main focus in cut and flow sparsification research is to strike the ideal trade-off between the size of $H$ and its quality $q$. 
In their seminal work, Moitra~\cite{Moitra09} and later {Moitra and} Leighton~\cite{LeightonM10/stoc} showed that there is a flow sparsifier on just the terminal set (that is with size $k$) of quality $\bigO\paren[\big]{\frac{\log k}{\log\log k}}$, and their work was later made constructive by different works~\cite{CharikarLLM10,EnglertGKRTT14,MakarychevM10}. %
These works also showed that any sparsifier of size $k$ would have a quality loss of at least $\Omega(\sqrt{\log k/\log \log k})$~\cite{MakarychevM10}. 
Hence, a significant improvement in the quality would either require more vertices in the sparsifier or special properties of the graph. 

Considerable research effort has been dedicated to cut and flow sparsification in more restricted settings.
For instance, one can construct a flow sparsifier on only the terminals with quality $\bigO(r)$ for graphs that exclude $K_{r,r}$ as a minor~\cite{CalinescuKR04,Moitra09}, and of quality $2$ for unweighted trees and quasi-bipartite graphs~\cite{GoranciR16}; these results exploit connections between flow sparsification and the $0$-extension problem.
For general graphs, Chuzhoy~\cite{Chuzhoy12} gave a construction with quality $O(1)$ and size $C^{\bigO(\log\log C)}$, where $C$ is the total capacity of edges incident on terminals (and hence might be as large is $\Omega(nk)$). 
On the other hand, one can construct quality-1 cut sparsifiers of size $\bigO(2^{2^k})$ for general graphs~\cite{HagerupKNR98,KhanR14/ipl}, of size $\bigO(k^2\cdot 2^k)$ for {planar graphs~\cite{KrauthgamerR13}} and $\bigO(k) \cdot 2^{2^{\bigO(w)}}$ for graphs with treewidth $w$~\cite{ChaudhuriSWZ00} (such sparsifiers are also known as mimicking networks or exact sparsifiers in the literature). 
The scenario is drastically different for exact flow sparsifiers. Recent breakthroughs have ruled out the existence of exact flow sparsifiers~\cite{KrauthgamerM23}, {as well as contraction-based} quality-$(1+\varepsilon)$ flow sparsifiers~\cite{ChenT24a/soda}, with size as a function of $k$, {which is achieved} by demonstrating hard instances on 6 terminal networks. 
{On the other hand, contraction-based flow sparsifiers of quality $1+\epsilon$ and size $2^{\tt{poly}\left(1/\varepsilon \right) }$ exist for} every 5-terminal network~\cite{ChenT24a/soda}.

For the special case of \emph{quasi-bipartite graphs}~\cite{RajagopalanV99}, where non-terminals form an independent set, Andoni et al.~\cite{AndoniGK14} improved the bound of Chuzhoy significantly: they give a quality-$(1+\varepsilon)$ flow sparsifier of size $\poly(k/\varepsilon)$, %
{recently improved to size $k \cdot \poly(\log k, \epsilon^{-1})$~\cite{AbrahamDKKP16,JambulapatiLLS23}.}
The significance of these result lies in the fact that for the simpler case of cut sparsification, these graphs present some of the hardest known instances~\cite{KhanR14/ipl,KrauthgamerR13}, where a quality-1 cut sparsifier requires $2^{\Omega(k)}$ vertices~\cite{KhanR14/ipl,KrauthgamerR13}. %
For unweighted quasi-bipartite graphs, this bound is tight~\cite{GoranciR16}, but no construction was known for general capacities.
Thus, the result of Andoni et al.~\cite{AndoniGK14} raises hope that we can overcome the lower bounds by designing quality-$(1+\varepsilon)$ sparsifiers, even in the flow sparsification setting. %

Andoni et al.\ also prove a second result where they show a generic reduction from graphs of bounded treewidth to general graphs in the following sense: 
They give a construction whereby the existence of any quality-$q(k)$ sparsifier of size $S(k)$ implies a quality-$q(6w)$ sparsifier of size $k^4\cdot S(6w)$ where $w$ is the treewidth of the graph. 

\pagebreak
\subparagraph{Our Results.}

We give the following results on quasi-bipartite graphs and their extensions:
\begin{enumerate}
    \item A cut sparsifier of size $2^{k^2}$ for quasi-bipartite graphs (\Cref{thm:bipartite});
    \item A contraction-based flow sparsifier of size $3^{k^3}$ for quasi-bipartite graphs (\Cref{thm:bipartite-flow});
    \item A cut sparsifier of size $k + 2^{c^2}$ and a flow sparsifier of size $k + 3^{c^3}$ when the graph has a vertex cover of size $c$ (\Cref{thm:vc});
    \item A cut sparsifier of size $kd + 4^{d^3}$ when $G$ has vertex integrity $d$~\cite{BarefootES87,GimaHKKO22}, that is, a separator $X\subseteq V$ such that $|X| + |C| \leq d$ for every component $C$ of $G-X$ (\Cref{thm:vi}).
\end{enumerate}
Note that our result almost matches the lower bound given by Krauthgamer and Rika~\cite{KrauthgamerR13} (up to a polynomial in the exponent).
Further, our result on flow sparsifiers {shows that instances on quasi-bipartite graphs are not hard for flow sparsification, as they admit better bounds than general graphs.} 

Our main contribution lies in developing a novel tool for constructing sparsifiers that is based on connections to polyhedral geometry. %
We show that this technique can be applied to obtain cut and flow sparsifiers, and even when the terminal set separates the graph into small components. %
Furthermore, we show that the size of the sparsifier actually grows with the size of the separator whose removal leaves only small components, %
thus obtaining improved results for bounded vertex cover and vertex integrity, two structural graph parameters that {have recently gained popularity in the parameterized community}~\cite{BodlaenderGP23,FominLMT18,GimaHKKO22,GimaO22,LampisM21,OostveenL23}.
{These have particular relevance when studying problems that are hard for more general parameters, such as treewidth~\cite{FialaGK11} and treedepth~\cite{BodlaenderGP23,GimaHKKO22}; they also allow for stronger meta-theorems~\cite{GimaO22,LampisM21} compared to the classic theorem of  Courcelle for bounded-treewidth graphs~\cite{Courcelle90/iandc}.}

We give an additional result for graphs with treewidth $w$, improving the results of Andoni et al.~\cite{AndoniGK14} and Chaudhuri et al.~\cite{ChaudhuriSWZ00}:
we construct a flow sparsifier of size $k\cdot S(2w)$ and quality $g(2w)$ provided that every $k$-terminal network admits a quality $g(k)$ flow sparsifier with size $S(k)$ (see \Cref{sec:btw}). 
This implies an $\bigO\paren[\big]{\frac{\log w}{\log\log w}}$-quality sparsifier with size $\bigO(k\cdot w)$ for graphs with treewidth $w$ using results from~\cite{EnglertGKRTT14, LeightonM10/stoc}.

    The proofs of lemmas marked with an asterisk (*) are deferred to the appendix.

\subparagraph{Concurrent Work.}
Independently of our work, Chen and Tan designed contraction-based cut sparsifiers of size $k^{O(k^2)}$ for quasi-bipartite graphs~\cite{abs-2407-10852}, as well as quality-$(1+\epsilon)$ cut sparsifiers of size $\poly(k,\epsilon)$ for planar graphs.
Their construction for quasi-bipartite graphs is slightly larger than our cut sparsifier (of size $2^{k^2}$), but smaller than our contraction-based construction ($2^{k^3}$); they do not present any flow sparsifiers.

\subparagraph{Techniques.} %
The main idea behind our construction of cut and flow sparsifiers for bipartite graphs is to consider them as a union
of stars centered on Steiner vertices, which can be handled independently as
they do not share any edges. %
By showing that the number of different ways that stars can participate in
cuts is bounded, we get a sparsifier with the same bound on the size, as
equivalent stars can be contracted together. %
Thus, it is sufficient to show how to put stars into a bounded number of classes. %

Our approach is based on polyhedral theory. We represent each star by a vector in $\Rnn^k$ where each coordinate is the capacity of an edge between the center of the star and a terminal. We show that there are $2^{k^2}$ stars, which we
refer to as \emph{basic stars}, such that any star is the conic combination
of at most $k$ of them. %
Using this idea, we obtain two constructions: the first is to construct $H$
from the terminals and the set of \emph{basic stars} with appropriately
scaled-up capacities; whereas the second (slightly larger) is to simply
contract vertices that are the conic combination of the same set of basic
stars. %
Since the second construction is contraction-based, a consequence of our
results is that optimal algorithm for contraction-based sparsifiers presented
by Khan et al.~\cite{KhanR14/ipl} obtains a sparsifier of size at most
$2^{k^3}$ for bipartite graphs.

The construction of the flow sparsifiers is more involved. 
A first attempt would be to use our result for cut sparsifier along with a result from Andoni et al.~\cite[Theorem 7.1]{AndoniGK14} which roughly implies that if the flow-cut gap for the given graph is $\gamma$, then an exact contraction-based cut sparsifier is also a flow sparsifier of quality $\gamma$. 
Unfortunately, bipartite networks can have a flow-cut gap of $\Omega(\log k)$ and hence this approach fails. 

We rather take a more direct approach: by relying on the above mentioned polyhedral tool to define equivalence classes on the stars, we show that we can contract the stars in each equivalence class into a single one. 
Applying this technique is much more challenging in the case of flow sparsifiers, since one has to ensure that every multicommodity flow between terminals (and not just bipartitions) must be preserved. 

The main technical difficulty is to show that merging two stars preserves the routing of multicommodity flows, and in particular, that if a demand can be routed in the merged star, then it can also be routed in the original stars (if the original stars are in the same equivalence class).
We achieve this by splitting the demand so that each part can be routed in a different star, by a process which iteratively adds demand to one of the two stars, and if that is no longer possible, refines the partition by globally switching demands between the two stars. 
We show that when the process can no longer introduce new demand, then there is a saturated cut in the merged star, and thus all of the demand must be already routed.

\section{Preliminaries}
\label{sec:prelim}

A network $G=(V,E,u)$ is a graph $(V,E)$ with edge capacities $u\colon E(G)
\to \Rbb_{\geq 0}$. %
It is usually associated with a set of \emph{terminals} $K \subseteq V(G)$, whose
size we denote by $k$. %
We refer to vertices in $V(G) \setminus K$ as \emph{non-terminal} or
\emph{Steiner} vertices, %
and say that two networks $G_1$, $G_2$ are \emph{Steiner-disjoint} if $V(G_1)
\cap V(G_2) \subseteq K$.

We consider a cut to be a subset of vertices $X \subset V$, with cut edges
$\delta(X) = {E(X, V-X)}$. %
For convenience, we usually write $u(X)$ to mean $u(\delta(X))$ for any $X
\subseteq V$. %
A cut $X$ \emph{separates} $A \subseteq K$ if $A \subseteq X$ and $K-A
\subseteq V-X$, and it is a min-cut for $A$ if it minimizes the capacity among
all cuts separating $A$. %
We denote by $\mc_G(A)$ the smallest (minimum $|X|$) min-cut (in $G$) that
separates $A$, and $\kappa_G(A)$ its capacity. %
If the network is clear from context, we drop the subscript in $\mc(A)$,
$\kappa(A)$.

\subparagraph{Cut sparsifiers.}
A \emph{cut sparsifier} of quality $q\geq 1$ for a network $G$ with terminals $K$
is a network $H$ such that $K \subseteq V(H)$ and for every subset $A \subset
K$, the capacity of the min-cut separating $A$ is $q$-approximated, that is,
\[\kappa_G(A) \leq \kappa_H(A) \leq q\cdot \kappa_G(A).\]
Unless specified, a cut sparsifier is of quality $1$.

\subparagraph{Flow sparsifiers.}
A \emph{flow sparsifier} of quality $q \geq 1$ for a network $G$ with
terminals $K$ is a network $H$ that $q$-approximately preserves the
multi-commodity flows for any demand. %
We use the formal description in the work of Andoni et al.~\cite{AndoniGK14}. %

We say that a demand $\bfd \in \Rbb_{\geq 0}^{K\times K}$ \emph{is routed} in
$G$ by flow $f\geq 0$ if $\sum_{P \in \Pcal_{s,t}} f_P = d\paren{s,t}$ and $\sum_{P \ni
e} f_P \leq u(e)$, where $f$ is defined over paths and $\Pcal_{s,t}$ is the set of
all $s$-$t$-paths. %
We consider both demands and paths as symmetric, that is, $d(s,t) = d(t,s)$,
but $\Pcal_{s,t} = \Pcal_{t,s}$ so the same flows can satisfy both demands.

The demand polytope~\cite{AndoniGK14} for a network is the set all demands
that can be routed in $G$, $\Dcal(G) = \set{\bfd: \text{$\bfd$ can be routed
in $G$}}$.
Given a demand vector $\bfd$, its \emph{flow factor} is the value
$
\lambda_G(\bfd) = \sup\set{\lambda \geq 0 : \lambda \bfd \in \Dcal(G)}.
$

We formally define \emph{flow sparsifiers} as follows: %
$H$ is a quality-$q$ flow sparsifier for $G$ with terminals $K$ if for all
demand vectors $\bfd$,
\[
\lambda_G(\bfd) \leq \lambda_H(\bfd) \leq q\cdot \lambda_G(\bfd).
\]
In particular for $q=1$, we have that $\bfd \in \Dcal(G)$ if
and only if $\bfd \in \Dcal(H)$.

\subparagraph{Treewidth.}
A tree decomposition $(T,\mathcal B)$ of a graph $G$ is a tree $T$ together
with a collection of subsets of vertices $\mathcal B = \set{B_i}_{i \in V(T)}$
called \emph{bags}, such that:
\begin{itemize}
  \item For each edge $uv \in E(G)$, there is a bag $B_i$ containing both $u$ and $v$.
  \item For each vertex $v \in V(G)$, the collection of bags containing $v$ induces a non-empty subtree of $T$.
\end{itemize}
The \emph{width} of $(T,\mathcal B)$ is $w(T,\mathcal B) = \max_{i \in V(T)} (|B_i| - 1)$. %
The \emph{treewidth} of $G$ is the minimum width achievable by any tree decomposition.

We assume that there are no two identical bags in the decomposition, as
otherwise we can simply contract the edge connecting the corresponding nodes. %
We consider that each edge $uv$ is associated with a single node of $T$,
namely the node closest to the root whose bag contains both $u$ and $v$. %
The collection of edges associated with a node $i \in T$ is denoted by $E_i$,
and the subgraph induced by a bag is $G[i] = (B_i, E_i)$. %
This notation is particularly useful when talking about the graph induced by
collections of bags, and thus for a subset $R \subseteq V(T)$ we write $B(R) =
\bigcup_{i \in R} B_i$, $E(R) = \bigcup_{i \in R} E_i$, and $G[R] =
\paren{B(R), E(R)}$. %
We remark that $G[R]$ and $G[B(R)]$ are subgraphs on the same subset of
vertices but with different sets of edges.

Computing the treewidth of a graph, together with the corresponding tree
decomposition, is an NP-hard problem~\cite{ArnborgCP87}, but can be computed in
time $w^{O(w^3)} \cdot n$~\cite{Bodlaender96} for a graph of treewidth $w$. %
For a faster running time, we can get a tree decomposition with width $2w+1$
in time $2^{O(w)} \cdot n$ due to the recent work by
Korhonen~\cite{Korhonen21}. %
For a more detailed introduction to treewidth, see e.g.\ the book by Cygan et
al.~\cite{CyganFKLMPPS15}. %

\subsection{Basic Tools}
\label{sec:basic}

\begin{restatable}[*]{lemma}{lemBasicTransitivity}
\label{lem:basic:transitivity}
Let $G$ be a network. %
If $H$ is a quality-$q$ sparsifier for $G$ with terminals $K$, and $L$ is a quality-$r$ sparsifier for $H$ with terminals $K'$, $K' \subseteq K$, then $L$ is a quality-$qr$ sparsifier for $G$ with terminals $K'$, where the statement works if $H$, $L$ are cut sparsifiers or flow sparsifiers. %
\end{restatable}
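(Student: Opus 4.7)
The plan is to unwind the definitions of sparsifier quality and chain the two sandwich inequalities together. The only subtlety is that the ``cut'' or ``demand'' objects for the outer sparsifier pair $(G,H)$ live on terminal set $K$, while those for the inner pair $(H,L)$ live on $K' \subseteq K$, so I need to lift an object on $K'$ to one on $K$ before applying the outer sparsifier property.

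First I would handle the cut case. Fix any $A \subseteq K'$. Since $K' \subseteq K$, $A$ is also a valid terminal bipartition with respect to $(G,K)$ and $(H,K)$, so the quality-$q$ property of $H$ gives $\kappa_G(A) \leq \kappa_H(A) \leq q\cdot \kappa_G(A)$, and the quality-$r$ property of $L$ gives $\kappa_H(A) \leq \kappa_L(A) \leq r\cdot \kappa_H(A)$. Concatenating the two chains immediately yields $\kappa_G(A) \leq \kappa_L(A) \leq qr\cdot \kappa_G(A)$, as required.

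For flow sparsifiers, fix a demand $\bfd \in \Rbb_{\geq 0}^{K'\times K'}$, and let $\bfd'\in \Rbb_{\geq 0}^{K\times K}$ be its extension by zero outside $K'\times K'$. I would first observe that $\lambda_G(\bfd) = \lambda_G(\bfd')$ and $\lambda_H(\bfd) = \lambda_H(\bfd')$: any feasible flow in $G$ or $H$ routing $\lambda\bfd$ is also a feasible flow routing $\lambda\bfd'$ (the new coordinates carry zero demand) and vice versa. Applying the quality-$q$ property of $H$ to the demand $\bfd'$ over $K$ gives $\lambda_G(\bfd') \leq \lambda_H(\bfd') \leq q\cdot \lambda_G(\bfd')$, and the quality-$r$ property of $L$ applied to $\bfd$ over $K'$ gives $\lambda_H(\bfd) \leq \lambda_L(\bfd) \leq r\cdot \lambda_H(\bfd)$. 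Substituting the equalities $\lambda_G(\bfd)=\lambda_G(\bfd')$ and $\lambda_H(\bfd)=\lambda_H(\bfd')$ and chaining the inequalities yields $\lambda_G(\bfd) \leq \lambda_L(\bfd) \leq qr\cdot \lambda_G(\bfd)$.

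There is essentially no obstacle beyond the bookkeeping step of observing that demands (and cuts) on $K'$ pull back to demands (and cuts) on $K$ without changing the flow value or cut capacity; everything else is transitivity of inequalities.
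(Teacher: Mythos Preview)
Your proof is correct and follows essentially the same approach as the paper: chain the two sandwich inequalities for $\kappa$ (cut case) and $\lambda$ (flow case). Your explicit zero-extension of a demand on $K'$ to one on $K$ is slightly more careful than the paper, which simply applies the outer sparsifier inequality to $\bfd$ without comment; otherwise the arguments are identical.
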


We recall the splicing lemma of Andoni et al.~\cite{AndoniGK14} for flow
sparsifiers, which shows that it is sufficient for a sparsifier to preserve
routings along terminal-free paths.

\begin{lemma}[{\cite[Lemma 5.1]{AndoniGK14}}]
\label{lem:basic:splicing}
Let $G$ and $H$ be two networks with the same set of terminals $K$, and fix
$\rho \geq 1$. %
Suppose that whenever a demand $\bfd$ between terminals in $K$ can be routed
in $G$ using terminal-free flow paths, demand $\bfd/\rho$ can be routed in $H$
(by arbitrary flow paths).

Then for every demand $\bfd$ between terminals in $K$ that can be routed in
$G$, demand $\bfd/\rho$ can be routed in $H$.
\end{lemma}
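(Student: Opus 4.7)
The plan is to convert any routing of $\bfd$ in $G$ into a routing of $\bfd/\rho$ in $H$ by a \emph{splice-then-reassemble} argument. First, I fix a routing $f$ of $\bfd$ in $G$, decomposed as a nonnegative combination over paths. For each path $P$ in the support of $f$, with endpoints $s,t\in K$, I list the terminals encountered along $P$ in order as $v_{i_0}=s,\, v_{i_1},\ldots, v_{i_r},\, v_{i_{r+1}}=t$, and split $P$ at the internal terminals into sub-paths $P^0,\ldots,P^r$, each of which is internally terminal-free.

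Taken together, these sub-paths form a terminal-free routing in $G$ of the \emph{spliced demand} $\bfd'$ given by
\[
d'(u,v) \;=\; \sum_P f_P \cdot \bigl|\{\, j : \{v_{i_j},v_{i_{j+1}}\} = \{u,v\} \,\}\bigr|,
\]
since every edge's aggregate usage under the spliced routing coincides with its usage under $f$. The hypothesis of the lemma then supplies a routing $g$ of $\bfd'/\rho$ in $H$, expressed as a nonnegative combination over $H$-paths (which need not be terminal-free).

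The reassembly step is the crux. For each original $G$-path $P$ and each of its sub-paths $P^j$, I allocate $f_P/\rho$ units of $g$'s flow between terminals $v_{i_j}$ and $v_{i_{j+1}}$ to ``serve'' $P$. Because the total quantity demanded across all pairs $(P,j)$ whose sub-path endpoints equal a given $(u,v)$ is exactly $d'(u,v)/\rho$, which matches the supply of $g$ between $u$ and $v$, any feasible fractional allocation (e.g.\ proportional to the $H$-path weights) balances the books. Concatenating, for each fixed $P$, the $r+1$ allocated $H$-paths yields an $s$-$t$ walk in $H$ carrying $f_P/\rho$ units; shortcutting cycles turns it into a simple $s$-$t$ path. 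Summing over $P$ produces a path-flow in $H$ whose per-edge usage is bounded by the usage of $g$, and which routes $\bfd/\rho$.

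The main obstacle is verifying that this reassembly is internally consistent, i.e., that $g$ can be apportioned among the original $G$-paths so that the concatenated walks respect capacities in $H$. This is exactly where the definition of $\bfd'$ pays off: aggregating over every occurrence of a terminal pair across all sub-paths matches the supply provided by $g$, so the apportionment reduces to a fractional matching of supplies to demands whose totals agree, which is always solvable. No further properties of $G$ or $H$ (such as structure of terminal-free paths in $H$) are needed, since the capacity bound on the reassembled flow comes for free from the pointwise-equality of edge usages.
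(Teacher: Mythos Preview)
The paper does not prove this lemma; it simply cites it as \cite[Lemma~5.1]{AndoniGK14} and uses it as a black box. So there is no ``paper's own proof'' to compare against. That said, your argument is the standard splicing argument that gives the lemma its name, and it is essentially correct.

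Two small points of imprecision worth tightening. First, when you write ``concatenating, for each fixed $P$, the $r{+}1$ allocated $H$-paths yields an $s$-$t$ walk,'' each allocation is in general a \emph{fractional combination} of $H$-paths, not a single path; the clean way to say this is that you overlay the $r{+}1$ allocated sub-flows (each of value $f_P/\rho$), observe that flow conservation at the intermediate terminals $v_{i_1},\ldots,v_{i_r}$ makes the overlay a valid $s$-$t$ flow of value $f_P/\rho$, and then decompose into paths. Second, the reason the per-edge usage is bounded by that of $g$ is not ``pointwise equality'' but simply that for each terminal pair $(u,v)$ the allocation fractions across all $(P,j)$ with endpoints $\{u,v\}$ sum to~$1$, so summing the allocated sub-flows over all $(P,j)$ reproduces $g$ exactly on every edge. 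With these clarifications your proof goes through.
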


Finally, we show the following generalization of the composition lemma to both
cut and flow sparsifiers.

\begin{restatable}[*]{lemma}{lemBasicUplus}
\label{lem:basic:uplus}
Let $G_1$ and $G_2$ be Steiner-disjoint networks with terminal set $K$.

If $H_1$ and $H_2$ are quality-$q$ (cut or flow) sparsifiers for $G_1$ and $G_2$ with terminal set $K \cap V(G_1)$, $K \cap V(G_2)$, respectively, then $H:=H_1 \uplus H_2$ is a quality-$q$ (cut or flow, resp.) sparsifier for $G:=G_1 \uplus G_2$ (parallel edges in $K$ are joined and their capacities summed).
\end{restatable}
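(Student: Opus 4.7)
The plan is to reduce both the cut and flow statements to the simple fact that, because $G_1$ and $G_2$ share no Steiner vertices, structural objects in $G_1 \uplus G_2$ split cleanly along the partition $V(G_1) \mid V(G_2)$ (with terminals lying in the intersection).

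For the cut case, I would take any cut $X \subseteq V(G)$ separating $A \subseteq K$ and set $X_i := X \cap V(G_i)$. Since no edge of $G$ has one endpoint in $V(G_1) \setminus K$ and the other in $V(G_2) \setminus K$, one gets $u_G(\delta_G(X)) = u_{G_1}(\delta_{G_1}(X_1)) + u_{G_2}(\delta_{G_2}(X_2))$, where a terminal-to-terminal edge that was merged in $G_1 \uplus G_2$ contributes its summed capacity as the sum of the two individual ones. Each $X_i$ separates $A \cap V(G_i)$ in $G_i$, and conversely any pair of separating cuts in $G_1,G_2$ recombines into a separating cut of $G$ (they automatically agree on $K$ since both must contain $A$ and exclude $K \setminus A$). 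Minimizing independently gives $\kappa_G(A) = \kappa_{G_1}(A \cap V(G_1)) + \kappa_{G_2}(A \cap V(G_2))$ and similarly for $H = H_1 \uplus H_2$. Applying the quality-$q$ guarantee to each summand and adding finishes this case.

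For the flow case I would invoke \Cref{lem:basic:splicing}. The central observation is that every \emph{terminal-free} path in $G_1 \uplus G_2$ lies entirely in $V(G_1)$ or entirely in $V(G_2)$: either it is a single edge between two terminals (which may have merged capacity and can be assigned to $G_1$ or $G_2$ up to each individual capacity), or its internal non-terminal vertices, being Steiner-disjoint between the two pieces, all belong to a single $V(G_i)$. Consequently a terminal-free flow $f$ routing $\bfd$ in $G$ decomposes as $f_1 + f_2$ with $f_i$ supported in $G_i$ and respecting $u_{G_i}$, routing demands with $\bfd^{(1)} + \bfd^{(2)} = \bfd$.

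From here both inequalities follow symmetrically by applying the sparsifier property to each piece and recombining. For $\lambda_H(\bfd) \geq \lambda_G(\bfd)$: $\bfd^{(i)} \in \Dcal(G_i)$ implies $\bfd^{(i)} \in \Dcal(H_i)$ (by $\lambda_{H_i} \geq \lambda_{G_i}$), and joining the two routings in $H$ gives $\bfd \in \Dcal(H)$; splicing extends this from terminal-free routings to all routings. For $\lambda_H(\bfd) \leq q\cdot\lambda_G(\bfd)$: swapping the roles of $G$ and $H$ in splicing, $\bfd^{(i)} \in \Dcal(H_i)$ yields $\bfd^{(i)}/q \in \Dcal(G_i)$ (by $\lambda_{G_i} \geq \lambda_{H_i}/q$), and recombining in $G$ gives $\bfd/q \in \Dcal(G)$. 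I do not anticipate a serious obstacle; the only bookkeeping point is correctly splitting flow on the merged terminal-to-terminal edges, which is handled by dividing it between the two copies up to each one's individual capacity.
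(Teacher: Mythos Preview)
Your proposal is correct and follows essentially the same approach as the paper: the cut case matches the paper's appendix proof (decompose a separating cut as $X_i = X \cap V(G_i)$, use additivity of the cut value, and apply the quality-$q$ bound on each piece), and for the flow case the paper simply cites Andoni et al.'s composition lemma, whose proof is precisely the splicing-based decomposition you sketch.
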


The proof for flow sparsifiers is given by Andoni et al.~\cite[Lemma 5.2]{AndoniGK14}; the proof for cut sparsifiers follows using similar arguments.

Let $G / vw$ be the network obtained from $G$ by contracting $v$ and $w$ into
a vertex denoted $vw$, that is, removing $v$ and $w$, adding a vertex $vw$
with edges to vertices $(N(v) \cup N(w)) \setminus \set{v,w}$, and setting the
capacity of each new edge $\set{vw, x}$ to $w_{G/vw}(\set{vw,x}) = u_G({vx}) +
u_G({wx})$, where $u_G({vx}) = 0$ if the edge does not exist.

\begin{restatable}[*]{lemma}{lemBasicEquivMerge}
\label{lem:basic:equiv-merge}
Let $v$, $w$ be vertices such that for every $A \subseteq K$, there is a
min-cut $X$ separating $A$ that either contains both $v$ and $w$ or neither of
them.

Then $G / vw$ is an exact cut sparsifier for $G$ with terminals $K$.
\end{restatable}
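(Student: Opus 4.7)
The plan is to prove the statement by establishing the equality $\kappa_{G/vw}(A) = \kappa_G(A)$ for every $A \subseteq K$, which is exactly the definition of an exact cut sparsifier (since $K \subseteq V(G/vw)$ trivially). The argument reduces to exhibiting, in each direction, a cut on one side of the contraction whose capacity matches a cut on the other side. The key structural observation is that the contraction $G/vw$ is designed so that cuts of $G$ placing $v,w$ on the same side are in capacity-preserving bijection with cuts of $G/vw$.

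For the direction $\kappa_{G/vw}(A) \geq \kappa_G(A)$, I would take any cut $X' \subseteq V(G/vw)$ separating $A$ and lift it to a cut $X \subseteq V(G)$: set $X = (X' \setminus \set{vw}) \cup \set{v,w}$ if $vw \in X'$, and $X = X'$ otherwise. In both cases $v$ and $w$ lie on the same side of $X$, so the edge $\set{v,w}$ (if present) does not cross $X$. By the definition of the contracted capacities $u_{G/vw}(\set{vw,x}) = u_G(\set{v,x}) + u_G(\set{w,x})$, each edge leaving $vw$ in $G/vw$ corresponds to one or two edges leaving $\set{v,w}$ in $G$ with total capacity equal to that of the contracted edge; all other edges are untouched. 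Hence $u_G(X) = u_{G/vw}(X')$, and since $X$ separates $A$, we obtain $\kappa_G(A) \leq u_{G/vw}(X')$, from which the inequality follows by taking the infimum over $X'$.

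For the reverse direction, I use the hypothesis: let $X$ be a min-cut in $G$ separating $A$ such that $\set{v,w} \subseteq X$ or $\set{v,w} \subseteq V(G) \setminus X$. Define $X' = (X \setminus \set{v,w}) \cup \set{vw}$ in the first case and $X' = X$ in the second. Then $X'$ separates $A$ in $G/vw$, and the same accounting as before gives $u_{G/vw}(X') = u_G(X) = \kappa_G(A)$, so $\kappa_{G/vw}(A) \leq \kappa_G(A)$.

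The only subtle point, which is where I would be most careful, is to verify that the capacity accounting is exact even when $v$ and $w$ share common neighbors: a neighbor $x \in N(v) \cap N(w) \setminus \set{v,w}$ contributes two parallel edges with total capacity $u_G(\set{v,x}) + u_G(\set{w,x})$ across $X$ in $G$, which is exactly the single edge $\set{vw,x}$ of capacity $u_{G/vw}(\set{vw,x})$ in $G/vw$. Everything else — edges entirely within $\set{v,w}$, and edges disjoint from $\set{v,w}$ — contributes identically to both sides (the former not at all, since $v,w$ lie on the same side of $X$). With this bookkeeping in hand, both inequalities close and the lemma follows.
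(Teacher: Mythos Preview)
Your proof is correct and follows essentially the same approach as the paper: both directions are established by lifting/projecting cuts between $G$ and $G/vw$, using the hypothesis only for the direction $\kappa_{G/vw}(A) \leq \kappa_G(A)$. Your version is more explicit about the capacity bookkeeping (common neighbors, the edge $\set{v,w}$ not crossing), but the argument is the same.
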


\section{Sparsifiers for Quasi-Bipartite Graphs}
\label{sec:bipart}

In this section we show how to compute cut and flow sparsifiers for
quasi-bipartite graphs, where the left side is the set of terminals. %
In a later section we show how to improve this to a more general case of
bounded vertex cover. %
Our results are formalized in the following theorems:

\begin{theorem}
\label{thm:bipartite}
Let $G = (V,E,u)$ be a network with terminal set $K$ of size $k$.

If $G$ is bipartite with partition $V = K \uplus (V \setminus K)$, then $G$
has a cut sparsifier of size $2^{k^2}$ and a contraction-based cut sparsifier of size
$2^{k^3}$.
\end{theorem}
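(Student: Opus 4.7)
The plan is to view $G$ as an edge-disjoint union of stars (one per Steiner vertex), glued only at terminal endpoints. By the parallel composition lemma (\Cref{lem:basic:uplus}), it suffices to sparsify this union, and each star is characterized by its capacity vector $c_v \in \Rbb_{\geq 0}^k$ with $c_v(t) = u(vt)$. For any subset $A \subseteq K$, the contribution of $v$'s star to a min-cut separating $A$ is exactly $\min\bigl(\sum_{t \in A} c_v(t),\, \sum_{t \notin A} c_v(t)\bigr)$, since $v$ can be placed on either side. Hence what matters about $c_v$ is only the sign pattern of the $2^k$ linear forms $\ell_A(x) = \sum_{t \in A} x_t - \sum_{t \notin A} x_t$ on $\Rbb_{\geq 0}^k$, and two Steiner vertices falling in the same cell of the induced hyperplane arrangement may be treated identically.

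This arrangement of $2^k$ hyperplanes in $\Rbb_{\geq 0}^k$ has at most $\sum_{i=0}^k \binom{2^k}{i} = 2^{O(k^2)}$ cells, each a polyhedral cone. I call the union of their extreme rays the \emph{basic stars}, of which there are at most $\binom{2^k}{k-1} \leq 2^{k^2}$ (each such ray is the intersection of $k-1$ hyperplanes with $\Rbb_{\geq 0}^k$). By conic Caratheodory's theorem, every original $c_v$ can be decomposed as $c_v = \sum_b \lambda_{v,b}\, b$ with at most $k$ positive coefficients, where the $b$'s with $\lambda_{v,b}>0$ all lie in the closure of the cell containing $c_v$. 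From here I get the two constructions: first, build $H$ from $K$ together with one Steiner vertex per basic star $b$, giving the edge $\{b,t\}$ capacity $b(t) \cdot \sum_v \lambda_{v,b}$, for total size $k + 2^{k^2}$; second, in $G$ itself, contract together all Steiner vertices with the same support set $\{b : \lambda_{v,b} > 0\}$ of at most $k$ basic stars, yielding at most $\binom{2^{k^2}}{k} \leq 2^{k^3}$ contracted Steiner classes.

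The main technical step, and likely the main obstacle, is to verify $\kappa_H(A) = \kappa_G(A)$. The key observation is that if a basic star $b$ satisfies $\ell_A(b) \neq 0$, then every cell whose closure contains $b$ lies on the same strict side of $H_A$ as $b$ (by convexity of the cells), and therefore every original $v$ using $b$ in its decomposition prefers placing its center on the same side as $H$ does for~$b$; if instead $b \in H_A$, the two sides contribute equally. Expanding $c_v$ in the min-cut formula for $G$ and swapping the sums over $v$ and $b$ then matches $\kappa_G(A)$ term by term with $\kappa_H(A)$. For the contraction-based construction, this same cell-agreement argument supplies the hypothesis of \Cref{lem:basic:equiv-merge}, so contracting each equivalence class is lossless. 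The delicate point is handling basic stars lying exactly on $H_A$, where one must argue that the placement ambiguity is consistent across the summation so that no slack is introduced and both inequalities defining a cut sparsifier are preserved.
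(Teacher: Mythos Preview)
Your proposal is correct and follows essentially the same route as the paper: both define the set $Q$ of basic stars as extreme rays of the star cones, apply conic Carath\'eodory to decompose each $c_v$ into at most $k$ agreeing basic stars, and derive the two sparsifiers exactly as you describe. The paper packages the correctness argument via \Cref{lem:bipart:star-decomp} and \Cref{lem:bipart:equiv-comb} (iterated use of \Cref{lem:basic:equiv-merge}) rather than your direct sum-swap, but the underlying reason is identical---agreement forces $\ell_A(b)$ and $\ell_A(c_v)$ to share a weak sign, so your flagged ``delicate point'' about $b\in H_A$ resolves automatically.
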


\begin{theorem}
\label{thm:bipartite-flow}
Let $G = (V,E,u)$ be a network with terminal set $K$ of size $k$.

If $G$ is bipartite with partition $V = K \uplus (V \setminus K)$, then $G$
has a contraction-based flow sparsifier of size $3^{k^3}$.
\end{theorem}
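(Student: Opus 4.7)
The plan is to exploit the natural star decomposition of the bipartite structure: since every Steiner vertex $s \in V \setminus K$ has all its neighbours in $K$, one may write $G = \biguplus_{s} G_s$, where $G_s$ is the star centred at $s$ with its incident edges; the networks $G_s$ are pairwise Steiner-disjoint. By \Cref{lem:basic:uplus} it suffices to produce a quality-$1$ flow sparsifier separately for any union of such stars. The construction groups the stars into a bounded number of equivalence classes and, for every class $\mathcal{C}$, contracts all Steiner centres in $\mathcal{C}$ into a single non-terminal vertex whose edge to $t_i \in K$ carries capacity $\sum_{s \in \mathcal{C}} u(s,t_i)$. Since a star is completely described by its capacity vector $c_s = \paren{u(s,t_1),\dots,u(s,t_k)} \in \Rbb_{\geq 0}^k$, bounding the number of classes reduces to bounding the number of ``types'' of such a vector.

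The polyhedral tool I would develop decomposes $\Rbb_{\geq 0}^k$ into simplicial cones, each generated by at most $k$ extreme rays drawn from a canonical family of at most $2^{k^2}$ \emph{basic stars}. The natural hyperplanes to use are the balance hyperplanes $\set{x \in \Rbb^k : \sum_{i \in A} x_i = \sum_{i \notin A} x_i}$ over $A \subseteq K$, because they simultaneously govern the cut value of a star and the tight routing constraints through its centre. Applying the standard Zaslavsky-type bound to this arrangement of $2^{k-1}$ hyperplanes and then taking a simplicial refinement produces at most $2^{k^2}$ basic stars along the $1$-dimensional cone intersections. Declaring two stars equivalent when their capacity vectors lie in the same simplicial cone (and in the relative interior of the same face of it) yields at most $3^{k^3}$ classes, which is the target size.

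The central correctness claim is that, within each class $\mathcal{C}$, the union $\biguplus_{s \in \mathcal{C}} G_s$ and the merged super-star $G_\mathcal{C}$ with capacity vector $c_\mathcal{C} = \sum_{s \in \mathcal{C}} c_s$ have the same demand polytope. By \Cref{lem:basic:splicing}, it is enough to consider terminal-free routings, which in any star are the $2$-edge paths through its centre; the feasibility constraints thus reduce to $\sum_j d_{ij} \leq c[i]$ for every terminal $i$. The easy direction (routable in the union implies routable in the super-star) follows by summing the per-star flows. The hard direction is to take any demand $\bfd$ routable in $G_\mathcal{C}$ and split it as $\bfd = \sum_{s \in \mathcal{C}} \bfd^{s}$, with each $\bfd^{s}$ routable in $G_s$; a simple induction on $|\mathcal{C}|$ reduces this to the pairwise case of two stars in the same simplicial cone.

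The main obstacle is exactly this pairwise splitting step, where the iterative procedure sketched in the technical overview must be carried out. My plan is: starting from any initial split $(\bfd^{s_1}, \bfd^{s_2})$, repeatedly try to add more demand to whichever side still has slack capacity; when no such local augmentation is available, use the fact that $c_{s_1}$ and $c_{s_2}$ are conic combinations of the \emph{same} basic stars to perform a global ``rebalancing'' along the shared generating directions, which exchanges demand between the two stars without changing the total. The common cone structure is precisely what certifies that this global swap preserves the per-star capacity constraints. The termination witness is to show that if neither a local nor a global move can proceed, then there exists a subset $A \subseteq K$ whose induced cut in $G_\mathcal{C}$ is saturated by $\bfd$, forcing $\bfd$ to have already been fully routed in the union. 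Establishing this saturated-cut certificate is the real heart of the proof and the step most sensitive to the polyhedral equivalence; the rest of the argument is bookkeeping around it.
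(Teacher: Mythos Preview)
Your overall architecture matches the paper's: decompose $G$ into stars, group the capacity vectors $c_s \in \Rbb_{\geq 0}^k$ into finitely many classes via a cone arrangement, contract all stars within a class, and reduce correctness to a pairwise splitting lemma for two equivalent stars. The gap is in the equivalence relation you choose.

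You build the arrangement only from the bipartition hyperplanes $\{x : x(A) = x(K\setminus A)\}$, of which there are $2^{k-1}$. This is precisely the notion the paper uses for its \emph{cut} sparsifier, and it yields at most $2^{k^2}$ extreme rays and hence at most $2^{k^3}$ classes --- not $3^{k^3}$; your stated count does not follow from your own setup. For the flow sparsifier the paper explicitly passes to the finer arrangement generated by all hyperplanes $\{x : x(A) = x(B)\}$ for \emph{disjoint} (not necessarily complementary) pairs $A, B \subseteq K$. There are $\Theta(3^k)$ such hyperplanes, at most $3^{k^2}$ extreme rays, and hence the $3^{k^3}$ bound; two stars are declared equivalent only when they agree on every inequality $c(A) \le c(B)$ of this kind (``strong agreement'').

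The reason this matters is exactly your termination witness. In the paper's proof, when no augmenting move is available, a reachability argument in an auxiliary bipartite graph on two copies of $K$ produces two subsets $X_1, X_2 \subseteq K$ that are in general neither equal nor complementary, together with $c_1(X_1) \le c_1(X_2)$ and $c_2(X_2) \le c_2(X_1)$. Strong agreement on the pair $(X_1, X_2)$ then forces both to be equalities, which yields the contradiction. With only bipartition agreement you cannot conclude anything from these two inequalities, so the single saturated bipartition cut you are hoping for will not in general appear. Your ``global rebalancing along shared generating directions'' does not fill the gap either: the generators live in $\Rbb^k$ while the demand lives in $\Rbb^{\binom{k}{2}}$, and you have not specified how a move in the former translates into a feasible demand swap respecting both stars' capacity constraints.
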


We remark that quasi-bipartite and bipartite graphs are equally hard to
handle, as we can simply consider a quasi-bipartite graph as the
Steiner-disjoint union of $G[K]$ and $G-E(K)$.

\begin{corollary}
Quasi-bipartite networks (with no edges between Steiner vertices) have cut sparsifiers of size $2^{k^2}$ and flow sparsifiers of size $3^{k^3}$.
\end{corollary}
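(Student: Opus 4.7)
The plan is to leverage the remark preceding the corollary and simply reduce the quasi-bipartite case to the bipartite case via the decomposition $G = G_1 \uplus G_2$, where $G_1 = G[K]$ is the subgraph induced by the terminals (carrying all terminal-to-terminal edges) and $G_2 = G - E(K)$ is the subgraph obtained by removing the terminal-to-terminal edges. Since the quasi-bipartite assumption forbids edges between Steiner vertices, every edge of $G_2$ is incident to a terminal on one side and a Steiner vertex on the other; hence $G_2$ is bipartite with partition $V = K \uplus (V \setminus K)$. The two subgraphs are Steiner-disjoint because $V(G_1) = K \subseteq K$.

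Next I would apply the appropriate bipartite construction to $G_2$: \Cref{thm:bipartite} yields a cut sparsifier $H_2$ for $G_2$ (with terminal set $K$) of size $2^{k^2}$, and \Cref{thm:bipartite-flow} yields a flow sparsifier $H_2$ of size $3^{k^3}$. The graph $G_1$ is already a network on only the terminal set $K$, so it trivially serves as an exact sparsifier (both cut and flow) for itself with terminal set $K$.

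I would then invoke the composition lemma \Cref{lem:basic:uplus} on the Steiner-disjoint decomposition to conclude that $H := G_1 \uplus H_2$ is a cut (resp.\ flow) sparsifier for $G = G_1 \uplus G_2$. Since $V(G_1) = K \subseteq V(H_2)$, the union does not add any new vertices beyond those of $H_2$, so $|V(H)| = |V(H_2)|$, giving the claimed bounds $2^{k^2}$ for cut sparsifiers and $3^{k^3}$ for flow sparsifiers.

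There is no real obstacle here: the argument is essentially a bookkeeping exercise after the remark in the text. The only things to double-check are (i) that the decomposition is truly Steiner-disjoint (which follows immediately since $G_1$ has no Steiner vertices), (ii) that parallel edges inside $K$ introduced by the composition are correctly handled by summing capacities as specified in \Cref{lem:basic:uplus}, so that the combined network $G_1 \uplus H_2$ really equals the composition corresponding to $G_1 \uplus G_2$, and (iii) that the size bound counts the terminals only once, which holds because $V(G_1) \subseteq V(H_2)$.
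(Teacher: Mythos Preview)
Your proposal is correct and follows exactly the approach the paper indicates: the paper's only argument for this corollary is the one-line remark that a quasi-bipartite graph is the Steiner-disjoint union of $G[K]$ and $G-E(K)$, after which \Cref{thm:bipartite}, \Cref{thm:bipartite-flow}, and \Cref{lem:basic:uplus} do the rest. Your bookkeeping checks (Steiner-disjointness, handling of parallel edges in $K$, and $V(G_1)=K\subseteq V(H_2)$ so no new vertices are added) are the right ones and all go through.
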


\subsection{Cut Sparsifiers}
\label{sec:bipart:cut}

This section is dedicated to proving \Cref{thm:bipartite}.

Let $v$ be the center of a star, and $(c(1), c(2), \ldots, c(k))$ be the
capacity vector of the edges to the terminals (with some ordering $K=\set{t_1,
\ldots, t_k}$). %
For each cut $S \subseteq K$, either $c(S) \leq c(K-S)$, $c(S) = c(K-S)$, or
$c(S) \geq c(K-S)$. %
Using the inequalities for $c$ and each subset $S \subseteq K$, we can define
a polyhedron $P_c$ of all the capacity vectors that cut the star in the same way
as $c$. %
Thus, the capacity vector $c$ is a conic combination of the extreme rays of
$P_c$, and therefore we can replace it in the graph by a conic combination of the stars
corresponding to the extreme rays, which we call \emph{basic stars}. %
Finally, we show that since $c$ agrees on every inequality with the basic
stars in its conic combination, the replacement preserves the value of
min-cuts, and so by replacing every star we obtain a sparsifier of $G$.

\subparagraph{Basic stars.}
Let $c \in \Rbb^k$ be a capacity vector. %
We define $\Scal_c$  to be the collection of subsets $S \subseteq K$ such that $c(S) \leq c(K-S)$, i.e.\ 
$\Scal_c = \set{S \subseteq K: c(S) \leq c(K-S)}$,
and the \emph{star cone} of $c$ to be:
\[
P_c := \set[\Big]{x \in \mathbb R^k_{\geq 0}:\; x(S) \leq x(K-S)\;\forall S \in \Scal_c}
\]
We say that a vector $x$ \emph{agrees with $c$} (on every cut) if $x(S) \leq
x(K-S)$ for all $S \in \Scal_c$, and thus $P_c$ is the polyhedron containing
all of the capacity vectors that agree with $c$. %
It is a cone since it is defined by constraints of the form $\alpha^T x \leq 0$. %
We remark that for every $S \subseteq K$, $S \in \Scal_c$ or $K-S \in
\Scal_c$, and both are present if and only if $x(S) = x(K-S)$.

We define the set of basic stars as stars constructed from extreme rays of
any such cone. %
For a given $c$, the extreme rays of the cone $P_c$ are found at the
intersection of $k-1$ tight inequalities: %
 $x_i \geq 0$ for some indices $i\in I$, and ${x(S) = x(K-S)}$ for some $S \in
J$, with $|I| + |J| = k-1$ (see e.g.~\cite[Sec.~3.12]{ConfortiCZ14}). %
Notice that, regardless of the capacity vector $c$, the extreme rays of $P_c$ are all
found using a tight subset of the same collection of inequalities. %
However, not every extreme ray belongs to every $P_c$, as they might disagree
on some inequalities outside of $J$. %
For convenience, we represent each ray by a vector with coordinates summing to 1.

Let $Q$ be the set of extreme rays of any cone as obtained above, that is, the
set of extreme rays obtained from intersection of $k-1$ independent tight
constraints with $x(K)=1$. %
Formally, let $\Ical_k$ be the collection of pairs $(I,J)$, $|I|+|J| = k-1$,
such that the constraints $x(i) = 0$ for $i \in I$, $x(S) = x(K-S)$ for $S \in J$
and $x(K)=1$ are all independent. %
Then
\begin{align*}
Q = \set[\big]{q_{IJ} \in \Rbb_{\geq 0}^k:\ &q_{IJ}(K) = 1; q_{IJ}(I) = 0; \\
&q_{IJ}(S) = q_{IJ}(K-S)\; \forall S \in J; (I,J) \in \Ical_k}.
\end{align*}

For each $q \in Q$, we can construct a star with center denoted $v_q$ and an
edge to each terminal $t_i \in K$ with capacity $q_i$. %
These stars are denoted $\emph{basic stars}$ and are referred to by their center $v_q$. %

The size of $Q$ is determined by the possible sets of inequalities that define
each of its elements. %
As the tight inequalities for $S$ and $K \setminus S$ are the same, there are
effectively at most $2^{k-1}+k \leq 2^k$ inequalities to choose from. %
Each element of $Q$ is defined by $k-1$ of these, and thus $|Q| \leq
\binom{2^k}{k-1} \leq 2^{k^2}$.

\begin{restatable}[]{lemma}{lemBipartConicComb}
\label{lem:bipart:conic-comb}
Any capacity vector $c$ can be written as the conic combination of at most
$k$ points in $Q$, all of which agree with $c$. %

Formally, there are $q_1, \ldots, q_{k} \in Q$, $\lambda(q_1), \ldots, \lambda(q_k) \geq 0$ such that :
\begin{align*}
\sum_{i=1}^{k} \lambda(q_i) \cdot q_i &= c &&\text{and}&
q_i(S) &\leq q_i(K-S) &\forall&i \in [k], S\in \Scal_c
\end{align*}
\end{restatable}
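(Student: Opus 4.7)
The plan is to apply Carathéodory's theorem for cones to $P_c$. First, observe that $c \in P_c$ by construction: the cone is defined precisely by the inequalities $x(S) \leq x(K-S)$ for $S \in \Scal_c$, and these are satisfied by $c$ by the very definition of $\Scal_c$. The cone $P_c$ is polyhedral (cut out by finitely many homogeneous linear inequalities) and pointed (since $P_c \subseteq \Rbb^k_{\geq 0}$ admits no nonzero line), so by the Minkowski–Weyl theorem it equals the conic hull of its finitely many extreme rays.

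Next, by Carathéodory's theorem for pointed polyhedral cones in $\Rbb^k$, any $x \in P_c$ is a conic combination of at most $k$ extreme rays of $P_c$; applying this to $c$ gives $c = \sum_{i=1}^k \mu_i r_i$ with $\mu_i \geq 0$ and each $r_i$ an extreme ray. Since any nonzero point of $P_c \subseteq \Rbb^k_{\geq 0}$ has strictly positive coordinate sum, I can rescale each $r_i$ so that $r_i(K) = 1$, absorbing the scaling into the coefficient $\lambda(r_i) := \mu_i \cdot r_i(K)$ before normalization. This gives exactly the claimed form $c = \sum_{i=1}^k \lambda(q_i)\, q_i$ once I identify the $r_i$ with elements of $Q$.

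To identify $r_i \in Q$, recall that an extreme ray of a polyhedral cone in $\Rbb^k$ is a one-dimensional face, determined by $k-1$ linearly independent tight constraints chosen from the defining inequalities of the cone. For $P_c$ these are $x_j = 0$ for some $j \in I \subseteq [k]$ and $x(S) = x(K-S)$ for some $S \in J \subseteq \Scal_c \subseteq 2^K$, with $|I| + |J| = k-1$. Together with $r_i(K) = 1$, this is exactly the description of some $q_{IJ} \in Q$ for $(I,J) \in \Ical_k$ (the pair lies in $\Ical_k$ by the independence of the chosen tight constraints, which is forced by the one-dimensionality of the face). Agreement with $c$ is automatic: $r_i \in P_c$ means $r_i(S) \leq r_i(K-S)$ for every $S \in \Scal_c$, which is exactly what ``agrees with $c$'' requires.

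The only non-routine check is the bookkeeping that matches the combinatorial description of extreme rays of $P_c$ to the abstract definition of $Q$; I expect this to be straightforward because the inequalities from $\Scal_c$ form a subset of the inequalities allowed in defining $\Ical_k$, so every extreme ray of $P_c$ survives as an element of $Q$. No issue arises from $c$ having zero coordinates or from $P_c$ being lower-dimensional: Carathéodory's bound of $k$ is relative to the ambient dimension, which is what we want.
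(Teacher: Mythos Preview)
Your proof is correct and follows essentially the same approach as the paper: apply Carath\'eodory's theorem for cones to $P_c$, observe that the resulting extreme rays lie in $Q$, and note that membership in $P_c$ is precisely the agreement condition. Your version is in fact more explicit than the paper's, which simply asserts that the extreme rays of $P_c$ are contained in $Q$ and cites Carath\'eodory without discussing pointedness or the normalization $q(K)=1$.
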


\begin{proof}
Let $P_c$ be the star cone corresponding to $c$. %
By Carathéodory's theorem (see e.g.~\cite[Sec.~3.14]{ConfortiCZ14}), $c$ is a
conic combination of at most $k$ extreme rays of $P_c$ (which are contained in
$Q$). %
In other words, there are $q_1, \ldots, q_k \in P_c \cap Q$ and $\lambda(q_1),
\ldots, \lambda(q_k) \geq 0$ such that $c = \sum_{i=1}^{k} \lambda(q_i) \cdot
q_i$.

All it remains to show is that every $q_i$ agrees with $c$ on every cut. %
Indeed, since every $q_i$ is an extreme ray of $P_c$, it must satisfy the
inequalities defining $P_c$, and thus agree with $c$. %
\end{proof}

We obtain a sparsifier for $G$ as follows: %
for each $v \in V-K$, write the capacity vector $c_v$ of the star centered at
$v$ as a conic combination of the points in $Q$, i.e.\ as $c_v = \sum_{q \in Q}
\lambda_v(q) q$.
Then, define $V_Q = \set{v_q: q \in Q}$ and take $H = (K \cup V_Q, K \times
V_Q, u')$; the capacity vector for each $v_q$ is $q$ scaled up by the sum of the
corresponding values $\lambda_v$: $u'(v_q, \cdot) = q \cdot \sum_v \lambda_v(q)$.

To determine the values $\lambda_v$, we can use the constructive proof of
Carathéodory's theorem, starting by computing the set $Q$ by enumeration in
time $2^{k^2}\cdot\poly(k)$. %
Given a capacity vector $c$, start by finding $q_1 \in Q \cap P_c$ that
agrees with $c$ on every cut; %
then, find the maximum value of $\lambda(q_1)$ such that $c-\lambda(q_1) q_1 \in P_c$; %
finally, set $c' = c - \lambda(q_1) q_1$ and repeat the process for $c'$ to find
the remaining $q_2, \ldots$ and $\lambda_2, \ldots$ %
Notice that at each step, $c'$ has at least one more tight inequality than $c$
(otherwise we could increase $\lambda(q_1)$), and thus the process stops after
$k$ iterations. %
In conclusion, the process of computing the set $Q$ takes time $2^{O(k^2)}$, and the process
of finding the coefficients for each star takes time $2^{O(k^2)}$ (per star).

The following lemma allows us to relate a star to its conic combination using
basic stars.

\begin{restatable}[]{lemma}{lemBipartStarDecomp}
\label{lem:bipart:star-decomp}
Let $v$ be a vertex with degree $k$ and capacity vector $c$ for its incident edges. %

If  $c$ can be written as the conic combination of points $q_1, q_2, \ldots,
{q_{\ell} \in Q}$, all of which agree with $c$, %
then $G$ has a sparsifier given by $G - \set{v} + \set{w_1, \ldots,
w_{\ell}}$, where the neighbors of vertices $w_1, \ldots, w_{\ell}$ are also
neighbors of $v$ and the capacities of the edges incident on each vertex $w_i$
are given by $\lambda(q_i) \cdot q_i$.
\end{restatable}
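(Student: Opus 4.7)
Write $G' = G - \{v\} + \{w_1,\dots,w_\ell\}$, where each $w_i$ has capacity vector $\lambda(q_i)\cdot q_i$ to the terminals. The plan is to show, for every $A\subseteq K$, that $\kappa_G(A)=\kappa_{G'}(A)$ by transferring cuts across the modification in both directions without loss of capacity. The only edges that change between $G$ and $G'$ are those incident to $v$ (replaced by those incident to $w_1,\dots,w_\ell$), so the entire bookkeeping reduces to understanding how a cut $S\subseteq K$ of the terminals is charged by the star centred at $v$ versus the stars at $w_1,\dots,w_\ell$.

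The crux of the proof is the following identity, valid for every $S\subseteq K$:
\[
\min\bigl(c(S),\,c(K-S)\bigr) \;=\; \sum_{i=1}^{\ell}\lambda(q_i)\cdot\min\bigl(q_i(S),\,q_i(K-S)\bigr).
\]
Assume WLOG that $S\in\mathcal S_c$, so $c(S)\le c(K-S)$. The hypothesis that every $q_i$ agrees with $c$ gives $q_i(S)\le q_i(K-S)$ for all $i$, so both sides select the $S$-side, and the identity is immediate from $c=\sum_i\lambda(q_i)q_i$. This single linearity fact is what powers both inequalities below, and is the main thing one has to be careful about; note that it fails without the agreement hypothesis, because then some $q_i$ could contribute $q_i(K-S)$ on the right-hand side.

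With this identity in hand, both directions are routine transferral arguments. For $\kappa_{G'}(A)\le\kappa_G(A)$, take a min-cut $X$ in $G$ separating $A$ and set $S:=X\cap K$; define $X':=(X\setminus\{v\})\cup\{w_i:v\in X\}$, i.e.\ move every $w_i$ to the same side of the cut as $v$. All non-star edges contribute identically in $G$ and $G'$, while the star contribution is $c(S)$ or $c(K-S)$ on both sides (by the conic combination), so $u(X')\le u(X)$. For $\kappa_G(A)\le\kappa_{G'}(A)$, take a min-cut $X'$ in $G'$ separating $A$, set $S:=X'\cap K$, and define $X:=X'\setminus\{w_1,\dots,w_\ell\}$, placing $v$ on whichever side of $K$ achieves $\min(c(S),c(K-S))$. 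Since each $w_i$ contributes at least $\lambda(q_i)\min(q_i(S),q_i(K-S))$ in $X'$, the identity gives a total star contribution in $X'$ of at least $\min(c(S),c(K-S))$, which is exactly what $v$ contributes in $X$; non-star edges are again unchanged. In both transfers $X\cap K=X'\cap K$, so the constructed cut separates $A$. Combining the two inequalities yields $\kappa_{G'}(A)=\kappa_G(A)$ for every $A\subseteq K$, proving that $G'$ is an exact cut sparsifier for $G$.
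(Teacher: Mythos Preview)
Your proof is correct. The key identity
\[
\min\bigl(c(S),\,c(K-S)\bigr)=\sum_{i=1}^\ell \lambda(q_i)\cdot\min\bigl(q_i(S),\,q_i(K-S)\bigr)
\]
holds precisely because agreement forces every $q_i$ to pick the same side as $c$, and from there both cut-transfer directions are clean. One cosmetic point: in the first direction you actually get $u_{G'}(X')=u_G(X)$, not merely $\le$, since placing all $w_i$ with $v$ reproduces the star cost exactly via $\sum_i\lambda(q_i)q_i=c$.

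Your route is genuinely different from the paper's. The paper argues by \emph{iterative contraction}: it views $G'$ as the starting point and repeatedly invokes \Cref{lem:basic:equiv-merge} to merge $w_{\ell-1}$ with $w_\ell$, then $w_{\ell-2}$ with the result, and so on, until a single vertex with capacity vector $\sum_i\lambda(q_i)q_i=c$ remains---recovering $G$. Each merge is justified because both vertices agree with $c$ and hence can always be placed together on the cheap side of any min-cut. By contrast, you bypass \Cref{lem:basic:equiv-merge} entirely and compare $G$ and $G'$ in one shot via the identity above. Your argument is more elementary and self-contained; the paper's is more modular and reuses its general merging lemma, which it also needs elsewhere (e.g.\ in assembling the basic-star sparsifier and in the vertex-integrity extension).
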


\begin{proof}
We will use the fact that the definition of sparsifier is reflexive, and thus
show that $G$ is a sparsifier for $G - \set{v} + \set{w_1, \ldots, w_\ell}$. %
We will iteratively contract two vertices in $\set{w_1, \ldots, w_\ell}$,
until we have the original graph. %
We assume that each vertex $w_i$ has the same neighbors as $v$ by adding edges
with capacity $0$.

Let $G' = G - \set{v} + \set{w_1, \ldots, w_\ell}$ so that $u_G(v) =
\sum_{i=1}^\ell u_{G'}(w_i)$. %
For any $i \in [\ell]$, $q_i$ agrees with $c$, so $u_{G'}(w_{i}) =
\lambda(q_{i}) \cdot q_{i}$ also agrees with $c$, as $P_c$ is a cone (and thus
scale-invariant). %
We observe that if some $A \in \Scal_c$, then $c(A) \leq c(K-A)$, and thus
there is a min-cut separating $A$ that does not contain $v$, as it is at least
as cheap to cut the edges to $A$ as to $K-A$; the same argument applies to any
vertex whose capacities agree with $c$, such as any $w_i$. %
Therefore, for any $A \subseteq K$, there is a min-cut separating $A$ that
contains both $w_{\ell-1}$ and $w_\ell$ (if $K-A \in \Scal_c$) or neither (if
$A \in \Scal_c$), and thus we can apply \Cref{lem:basic:equiv-merge} to merge
$w_{\ell-1}$ and $w_{\ell}$ to obtain a new vertex $w'_{\ell-1}$ with
capacities $u_{G'}(w_{\ell-1}) + u_{G'}(w_{\ell})$. %
This ensures that $u_G(v) = \sum_{i=1}^{\ell-2} u_{G'}(w_i) +
u_{G'}(w'_{\ell-1})$, and thus we can repeat the process until we only have a
single $w'_1$ left, with capacities $u_{G'}(w'_1) = \sum_{i=1}^{\ell}
u_{G'}(w_i) = u_G(v)$, which is equivalent to the original graph.
\end{proof}

We can now finish the proof of \Cref{thm:bipartite}.

\begin{proof}[Proof of \Cref{thm:bipartite}]
We start by computing the basic stars $Q$ for $k$ terminals, and then %
computing the coefficients $\lambda_v: Q \to \Rbb_{\geq 0}$ for each $v \in V
\setminus K$. We can now obtain two different constructions for a sparsifier,
one that replaces all the vertices with basic stars, and another which simply
contracts vertices that have the same cut profile.

\subparagraph{Basic star sparsifier.} We take $V_Q = \set{v_q: q \in Q}$ as the
set of all basic stars, and construct our sparsifier by adding edges from
every $v_q \in Q$ to every terminal $t \in K$. %
The capacities are given by summing over the $\lambda_v$ as follows:
$u'(v_q, t_i) = q_i \cdot \sum_v \lambda_v(q)$. %
The sparsifier is then given by $H = (K \cup V_Q, K \times V_Q, u')$.

This construction is equivalent to decomposing each $v$ into a conic
combination of basic stars using \Cref{lem:bipart:star-decomp}, and then
(iteratively) contracting all of the vertices created for the same $q \in Q$
(and different $v \in V \setminus K$) using \Cref{lem:basic:equiv-merge}. %
This also proves correctness of the sparsifier.

\subparagraph{Contraction-based sparsifier.} For this variant of the sparsifier,
we will only use \Cref{lem:basic:equiv-merge} and Carathéodory's
theorem~\cite[Sec.~3.14]{ConfortiCZ14}, %
as well as the algorithm introduced by Hagerup et
al.~\cite{HagerupKNR98}, in which we compute all $2^k$ minimum cuts, and
contract any two vertices whose capacity vectors agree with each other. %

We can also compute the sparsifier directly in time $O(2^k \cdot n^2)$, 
by computing the set $\Scal_c$ for each
of the at most $n$ stars, and then comparing the sets for each pair of stars
to decide whether to contract them. %
Alternatively, the sets $\Scal_c$ can be computed in time $O(2^k \cdot k n)$
and the vertices placed in buckets according to their set $\Scal_c$,
after which the vertices in each bucket can be contracted.

Thus, all that we need to do is to show that if two capacity vectors are the
conic combination of the same basic stars, then they agree, and thus can be
contracted.

\begin{lemma}
\label{lem:bipart:equiv-comb}
If capacity vectors $c$ and $c'$ can be written as the conic combination of
points $q_1, q_2, \ldots, q_{\ell} \in Q$ and each $q_i$ agrees with both $c$
and $c'$, then $c$ and $c'$ agree with each other.
\end{lemma}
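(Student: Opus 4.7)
The plan is to show that the two capacity vectors have the same cut profile, i.e.\ $\Scal_c = \Scal_{c'}$, which is what it means for them to agree with each other (since for any $S$, the inequality $x(S) \leq x(K-S)$ determines the side of the cut that is cheaper).

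First I would fix an arbitrary $S \in \Scal_c$ and aim to show $S \in \Scal_{c'}$. The hypothesis is that each $q_i$ agrees with $c$, which by definition means $q_i(S) \leq q_i(K-S)$ for every $S \in \Scal_c$. Since $c'$ is a conic combination $c' = \sum_{i=1}^{\ell} \lambda'(q_i)\, q_i$ with nonnegative coefficients, I can sum the inequalities $q_i(S) \leq q_i(K-S)$ weighted by $\lambda'(q_i)$ to obtain
\[
c'(S) \;=\; \sum_{i=1}^{\ell} \lambda'(q_i)\, q_i(S) \;\leq\; \sum_{i=1}^{\ell} \lambda'(q_i)\, q_i(K-S) \;=\; c'(K-S),
\]
which gives $S \in \Scal_{c'}$. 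Swapping the roles of $c$ and $c'$ and using that each $q_i$ also agrees with $c'$, I get the reverse inclusion, so $\Scal_c = \Scal_{c'}$ and the two vectors agree on every cut.

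There is essentially no real obstacle here beyond unpacking the definition of ``agrees with'' correctly: the conic-combination structure together with the fact that the $q_i$'s lie in the common star cone $P_c \cap P_{c'}$ immediately transfers every defining inequality from the $q_i$'s to the combination $c'$, and symmetrically to $c$. The argument is purely linear, uses only nonnegativity of the coefficients $\lambda'(q_i)$ (and $\lambda(q_i)$), and does not require Carathéodory or any further polyhedral machinery; it is the observation that makes the contraction-based construction work, since it says the equivalence class defined by ``decomposes into the same basic stars'' refines the equivalence class ``has the same cut profile,'' so \Cref{lem:basic:equiv-merge} is applicable to any bucket of such vertices.
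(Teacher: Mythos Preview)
Your proof is correct and essentially identical to the paper's: fix $S \in \Scal_c$, use that each $q_i$ agrees with $c$ to get $q_i(S) \leq q_i(K-S)$, then take the conic combination with coefficients $\lambda'(q_i)$ to conclude $c'(S) \leq c'(K-S)$, and finish by symmetry. The paper writes the same computation as $c'(S) - c'(K\setminus S) = \sum_i \lambda_{c'}(q_i)\,(q_i(S) - q_i(K\setminus S)) \leq 0$.
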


\begin{proof}
As the lemma is symmetric, we will simply show that if $S \in \Scal_c$, then
$S \in \Scal_{c'}$. %
If $S \in \Scal_c$, then $S \in \Scal_{q_i}$ and so $q_i(S) \leq
q_i(K\setminus S)$ by definition of agreement. %
Therefore, 
\[
c'(S) - c'(K \setminus S) 
= \sum_i \lambda_{c'}(q_i) \cdot \paren[\big]{q_i(S) - q_i(K \setminus S)}
\leq 0 
\quad \Rightarrow \quad
S \in \Scal_{c'}.
\qedhere
\]
\end{proof}

Given \Cref{lem:bipart:equiv-comb}, it is sufficient to bound the number of
possible ways that capacity vectors can be written as conic combinations, as
vectors that are written as a conic combination of the same basic stars are
always on the same minimal min-cuts, and thus can be contracted by
\Cref{lem:basic:equiv-merge}. %
As there are $2^{k^2}$ basic stars, and by Carathéodory's theorem each
capacity vector can be written as the conic combination of at most $k$ of
them, there are at most $2^{k^3}$ such combinations, and thus
after contraction, we are left with a sparsifier of size at most $2^{k^3}$.
\end{proof}

\subsection{Flow Sparsifiers}

We will show that a slight modification to the contraction-based sparsifier
in \Cref{sec:bipart:cut} increases its size to $3^{k^3}$ but makes it a flow sparsifier on bipartite graphs, proving \Cref{thm:bipartite-flow}. %

Let $c$ be a capacity vector. %
The only modification we need is to  consider all inequalities of the form
$c(A) \leq c(B)$ for $A,B \subseteq K$, $A \cap B = \emptyset$. %
Surprisingly, the cut sparsifiers of \Cref{thm:bipartite} preserve min-cuts
separating a subset $A\subseteq K$ from a disjoint subset $B\subseteq K$
without requiring these inequalities. %
However, for the construction of flow sparsifiers it is necessary that the
vertices we contract agree on the inequalities for each pair $(A,B)$.

We define $\Scal'_c = \set{(A,B) \subseteq K: c(A) \leq c(B)}$ and say
that $c'$ \emph{strongly agrees} with $c$ if $c'(A) \leq c'(B)$ for all $(A,B)
\in \Scal'_c$. %
Notice that $\Scal'_c$ has at most $3^k$ sets, as an element can be placed in
$A$, $B$ or neither. %

The \emph{strong star cone} of $c$ is defined as
$P'_c := \set[\big]{x \in \mathbb R^k_{\geq 0}:\; x(A) \leq x(B)\;\forall (A,B) \in \Scal'_c}$, and the set of extreme rays $Q' := \set{q \in \Rnn^k: \text{$q$ is an extreme ray of some $P'_c$}}$. %
The size of $Q'$ is upper-bounded by the possible combinations of $k-1$ tight inequalities, which implies that $|Q'| \leq 3^{k^2}$.

We use the contraction-based construction of \Cref{sec:bipart:cut} using the
concept of strong agreement. %
As before, we can show that each capacity vector is the conic combination of
$k$ extreme rays. %

\begin{restatable}[]{lemma}{lemBipartFlowConicComb}
\label{lem:bipart:flow:conic-comb}
Any capacity vector $c$ can be written as the conic combination of at most
$k$ points in $Q'$, all of which strongly agree with $c$. %

Formally, there are $q_1, \ldots, q_{k} \in Q'$, $\lambda(q_1), \ldots, \lambda(q_k) \geq 0$ such that :
\begin{align*}
\sum_{i=1}^{k} \lambda(q_i) \cdot q_i &= c &&\text{and}&
q_i(A) &\leq q_i(B) &\forall&i \in [k], (A,B)\in \Scal'_c
\end{align*}
\end{restatable}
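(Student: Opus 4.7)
The plan is to adapt the proof of \Cref{lem:bipart:conic-comb} essentially verbatim, replacing the star cone $P_c$ by the strong star cone $P'_c$. The key observation is that the argument there relied only on three ingredients: (i) $c$ lies in its own cone, (ii) the cone is pointed so Carathéodory's theorem applies, and (iii) every extreme ray of the cone satisfies the defining inequalities and hence agrees with $c$. Each of these carries over to the strong setting with minimal change.

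First I would verify that $c \in P'_c$. This is immediate: by the definition of $\Scal'_c$, for every $(A,B) \in \Scal'_c$ we have $c(A) \leq c(B)$, and $c \geq 0$ since $c$ is a capacity vector. Next, since $P'_c \subseteq \Rbb_{\geq 0}^k$, the cone is pointed (it is contained in the nonnegative orthant), so every point of $P'_c$ is a conic combination of its extreme rays, and by Carathéodory's theorem in cones (see \cite[Sec.~3.14]{ConfortiCZ14}) at most $k$ rays suffice. Thus there exist extreme rays $r_1, \ldots, r_k$ of $P'_c$ and nonnegative scalars $\mu_1, \ldots, \mu_k$ with $c = \sum_{i=1}^k \mu_i r_i$. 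Normalizing each ray so that its coordinates sum to one and absorbing the scaling into the coefficients, we obtain $q_1, \ldots, q_k$ and $\lambda(q_1), \ldots, \lambda(q_k) \geq 0$ with $c = \sum_i \lambda(q_i) q_i$. Each $q_i$ is an extreme ray of $P'_c$ and therefore lies in $Q'$ by definition.

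Finally, to see that each $q_i$ strongly agrees with $c$, note that since $q_i \in P'_c$, it satisfies $q_i(A) \leq q_i(B)$ for every $(A,B) \in \Scal'_c$, which is exactly the definition of strong agreement. This completes the argument.

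The proof is essentially routine once one recognizes that the only change from \Cref{lem:bipart:conic-comb} is to enlarge the family of defining inequalities from $\{x(S) \leq x(K-S) : S \in \Scal_c\}$ to $\{x(A) \leq x(B) : (A,B) \in \Scal'_c\}$; I do not expect any real obstacle. The one minor point worth being explicit about is the normalization step from extreme rays to unit-sum vectors in $Q'$, matching the convention used in the definition of $Q'$.
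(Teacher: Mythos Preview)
Your proposal is correct and follows essentially the same approach as the paper's proof, which simply invokes Carath\'eodory's theorem on $P'_c$ and notes that extreme rays of $P'_c$ lie in $Q'$ and satisfy the defining inequalities. Your version is more explicit about the cone being pointed and about the normalization step, but these are routine details rather than a different route.
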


\begin{proof}
Let $P'_c$ be the star cone corresponding to $c$. %
By Carathéodory's theorem, $c$ is a conic combination of at most $k$ extreme
rays of $P'_c$ (contained in $Q'$). %
Each $q_i$ is in $P'_c$, and thus strongly agrees with $c$.
\end{proof}

As $Q'$ has at most $3^{k^2}$ extreme rays, this means that there are at most 
$3^{k^3}$ classes where capacity vectors can be placed according to which set
of extreme rays produce it as a conic combination. %
As part of our proof, we need to show that if two capacity vectors are in the
same class, then they strongly agree.

\begin{restatable}[]{lemma}{lemBipartFlowEquivComb}
\label{lem:bipart:flow:equiv-comb}
If capacity vectors $c$ and $c'$ can be written as the conic combination of
points $q_1, q_2, \ldots, q_{\ell} \in Q'$ and both strongly agree with each
$q_i$ on every cut, then $c$ and $c'$ strongly agree.
\end{restatable}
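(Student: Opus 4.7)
The plan is to mirror the proof of \Cref{lem:bipart:equiv-comb} essentially verbatim, with the single-cut inequality $x(S)\le x(K\setminus S)$ replaced by the disjoint-pair inequality $x(A)\le x(B)$. The claim is symmetric in $c$ and $c'$, so it suffices to prove one direction: for every $(A,B)\in\Scal'_c$, deduce $(A,B)\in\Scal'_{c'}$.

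Fix an arbitrary $(A,B)\in\Scal'_c$. The hypothesis that each $q_i$ strongly agrees with $c$ unpacks to $q_i(A)\le q_i(B)$ for every $i\in[\ell]$. Writing the conic decomposition $c'=\sum_{i}\lambda_{c'}(q_i)\,q_i$ and using linearity of the map $x\mapsto x(A)-x(B)$, I obtain $c'(A)-c'(B)=\sum_{i=1}^{\ell}\lambda_{c'}(q_i)\bigl(q_i(A)-q_i(B)\bigr)\le 0$, since each coefficient $\lambda_{c'}(q_i)$ is nonnegative and each bracketed term is nonpositive. Hence $(A,B)\in\Scal'_{c'}$, and the identical computation applied to the conic decomposition of $c$ gives the reverse inclusion.

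I do not foresee any real obstacle. The content is the one-line observation that any inequality of the form $\alpha^\top x\le 0$ is preserved under nonnegative combinations of vectors satisfying it, which is exactly what drove the cut-version proof. The only point worth flagging is that the definition of strong agreement uses ordered disjoint pairs $(A,B)$ rather than complementary pairs $(S,K\setminus S)$, but this distinction never enters the linear-algebraic step; the strengthened notion is needed elsewhere in the flow-sparsifier construction to control which inequalities the extreme rays in $Q'$ are forced to satisfy, not for this preservation-under-conic-combinations argument.
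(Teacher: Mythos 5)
Your proof is correct and matches the paper's proof essentially verbatim: both argue by symmetry, fix a pair $(A,B)\in\Scal'_c$, note that each $q_i$ satisfies $q_i(A)\le q_i(B)$ by strong agreement, and conclude via linearity and nonnegativity of the conic coefficients that $c'(A)-c'(B)\le 0$.
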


\begin{proof}
As the lemma is symmetric, we will simply show that if $(A,B) \in \Scal'_c$, then
$(A,B) \in \Scal_{c'}$. %
If $(A,B) \in \Scal_c$, then $(A,B) \in \Scal_{q_i}$ and so $q_i(A) \leq
q_i(B)$ by definition of agreement. %
Therefore, 
\[
c'(A) - c'(B) 
= \sum_i \lambda_{c'}(q_i) \cdot \paren[\big]{q_i(A) - q_i(BS)}
\leq 0 
\quad \Rightarrow \quad
(A,B) \in \Scal_{c'}.
\qedhere
\]
\end{proof}

We then show that if we have two stars where their capacity vectors agree, we
can contract them to obtain a flow sparsifier. %
By repeating the process until we have at most 1 vertex per class, we obtain a
flow sparsifier of size $3^{k^3}$.

\begin{lemma}
\label{lem:bipart:flow:equiv-merge}
Let $v_1$, $v_2$ be centers of stars with leaves $K$ and let $c_1, c_2 \in
\Rnn^k$ be their capacity vectors (respectively). %

If $c_1$ and $c_2$ strongly agree with each other, that is $\Scal'_{c_1} =
\Scal'_{c_2}$, then $G / v_1v_2$ is an exact flow sparsifier for $G$ with
terminals $K$.
\end{lemma}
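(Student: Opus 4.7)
The plan is to show the equality of demand polytopes $\Dcal(G) = \Dcal(G/v_1v_2)$. By the composition lemma (Lemma~\ref{lem:basic:uplus}) I may assume that $v_1$ and $v_2$ are the only Steiner vertices of $G$, since every other Steiner component is Steiner-disjoint from the pair of stars being merged. By the splicing lemma (Lemma~\ref{lem:basic:splicing}), it then suffices to show that every demand $\bfd$ routable in $G/v_1v_2$ by terminal-free flow paths is also routable in $G$. Terminal-free paths in both networks have length two, so routability in $G/v_1v_2$ is equivalent to $\sum_{j \neq i} d_{ij} \leq c_1(i) + c_2(i)$ for every $i \in K$, while routability in $G$ requires a partition $\bfd = d^{(1)} + d^{(2)}$ satisfying $\sum_{j \neq i} d^{(\ell)}_{ij} \leq c_\ell(i)$ for $\ell \in \set{1,2}$. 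The easy direction is immediate: any routing in $G$ projects to $G/v_1v_2$ by identifying $v_1$ and $v_2$, and the merged edge $(t_i, v_1v_2)$ has capacity $c_1(i) + c_2(i)$, which absorbs the combined flow.

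For the hard direction I adopt the iterative splitting strategy outlined in the introduction. Starting from $d^{(1)} = d^{(2)} = 0$, I repeatedly try to push unassigned demand of some pair $(i,j)$ through a star $v_\ell$ that has slack capacity at both $t_i$ and $t_j$. When neither star admits this, I perform a global rebalancing: find an alternating sequence of pairs currently routed through $v_1$ and $v_2$ along which a uniform swap of assignments frees capacity where needed, analogous to augmenting paths in bipartite matching. When both local assignment and rebalancing are blocked, the alternating-path analysis produces an infeasibility certificate of the structured form: disjoint sets $A, B \subseteq K$ with $\sum_{i \in A, j \in B} d_{ij} > c_1(A) + c_2(B)$.

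Strong agreement then closes the argument. Since $\Scal'_{c_1} = \Scal'_{c_2}$, either $(A,B)$ lies in both $\Scal'_{c_1}$ and $\Scal'_{c_2}$ (so $c_1(A) \leq c_1(B)$ and $c_2(A) \leq c_2(B)$), or $(B,A)$ does (with the reverse inequalities). In the first case, summing the per-terminal routability bound over $i \in A$ gives
\[
\sum_{i \in A, j \in B} d_{ij} \;\leq\; \sum_{i \in A} \sum_{j \neq i} d_{ij} \;\leq\; c_1(A) + c_2(A) \;\leq\; c_1(A) + c_2(B),
\]
contradicting the certificate; the second case is symmetric, applying the bound over $i \in B$. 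Hence the algorithm must terminate with a valid partition of $\bfd$.

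The main obstacle is the middle step: proving that whenever the iterative algorithm gets stuck, the obstruction has the \emph{structured} form ``$\sum_{i \in A, j \in B} d_{ij} > c_1(A) + c_2(B)$ for disjoint $A, B$'' rather than an arbitrary LP dual witness. This demands a careful combinatorial analysis of the alternating-path termination, or equivalently a rounding-style reduction of general dual certificates to $0/1$ multipliers supported on disjoint sets. Once this structural certificate is secured, the short strong-agreement computation above delivers the contradiction immediately.
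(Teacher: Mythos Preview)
Your high-level plan matches the paper's: reduce via \Cref{lem:basic:uplus} and \Cref{lem:basic:splicing} to the two-star situation, characterise routability in a star by the degree conditions $\bfd(i)\le c(i)$, and then split $\bfd$ into $\bfd^{(1)}+\bfd^{(2)}$ by an iterative assignment with augmenting-path rebalancing. Up to this point you and the paper are doing the same thing.

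The genuine gap is exactly the one you flag yourself. You propose that when the algorithm is stuck one can extract a certificate of the specific shape ``disjoint $A,B\subseteq K$ with $\sum_{i\in A,\,j\in B} d_{ij} > c_1(A)+c_2(B)$'', and you defer the proof of this extraction. The paper does \emph{not} produce that certificate, and it is not clear that such a disjoint-pair cross-demand witness always exists. Instead, the paper builds an explicit auxiliary bipartite digraph $G_B$ on two copies $K_1,K_2$ of $K$ (with node capacities $c_1$ on $K_1$ and $c_2$ on $K_2$), encodes the current split $(\bfd_1,\bfd_2)$ as arc demands, and looks for a path from the stuck vertex $i_2$ along arcs of positive demand to any vertex with slack. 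If no such path exists, the reachable set $X=X_1\cup X_2$ satisfies $c_1(X_1)=\bfd_1(X_1)\le \bfd_1(X_2)\le c_1(X_2)$ and $c_2(X_2)=\bfd_2(X_2)\le \bfd_2(X_1)\le c_2(X_1)$. Strong agreement then forces both chains to collapse to equalities; in particular $\bfd_1(X_2)=c_1(X_2)$, so $\bfd_1(i)=c_1(i)$. Combined with $\bfd_2(i)=c_2(i)$ (the stuck assumption) and $\tilde d(i)>0$, this gives $\bfd(i)>c(i)$, contradicting routability in $G/v_1v_2$ at the \emph{single} terminal $i$.

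So the paper's contradiction is local (a degree violation at one terminal), not a cross-demand inequality over a pair of sets; and crucially $X_1,X_2$ need not be disjoint as subsets of $K$---strong agreement is applied after cancelling the common part. Your proposed endgame would require an additional structural/rounding lemma that the paper neither states nor needs. If you want to close the gap, the cleanest route is to follow the paper: make the auxiliary digraph explicit, argue reachability, and derive the single-vertex overflow directly.
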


\begin{proof}
Let $c := c_1 + c_2$, and $v := v_1v_2$ be the vertex created in $G':=G / v_1v_2$. %
We will show that if we can route a demand $\bfd$ in $G$ then we can route it
in $G'$ and vice-versa. %
If we can route a demand in $G$, then any demand routed on $v_1w$ or $v_2w$
can be routed instead on $vw$, as $c_1(w) + c_2(w) = c(w)$. %
We now show that a demand routed in $G'$ can also be routed in $G$. %

By \Cref{lem:basic:splicing}, and since the neighbors of $v_1$ and $v_2$ are
terminals, we can ``splice'' the flows so that any demand is routed only
through (internally) terminal-free paths. Thus, we focus on the demands that
are routed through paths $t_i v t_j$, $t_i, t_j \in K$ in $G'$. %
Specifically, if $\bfd$ is a demand that can be routed in $K \cup
\set{v}$, then we construct demand vectors $\bfd_1$, $\bfd_2$ that
can be routed through $v_1$, $v_2$, such that $\bfd_1 + \bfd_2 = \bfd$. %

\begin{observation}
In a star $K \cup \set{v}$ with capacities $c$, a demand $\bfd$ can be routed
if and only if for every $i \in [k]$, $\bfd(i):= \sum_j d(t_i,t_j) \leq c(i)$.
\end{observation}

\begin{figure}
  \begin{subfigure}[t]{0.5\textwidth}
    \centering
    \setlength{\unitlength}{157bp}%
    \begin{picture}(1,0.928)%
      \put(0,0){\includegraphics[width=\unitlength]{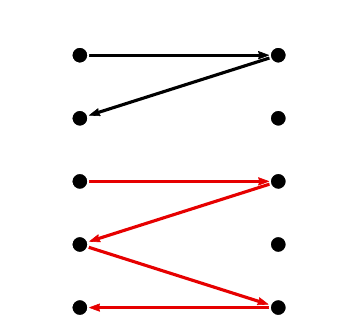}}%
      \put(0.19,0.83){$K_1$}%
      \put(0.74,0.83){$K_2$}%
      \put(0.42,0.81){$d_1(i,j)$}%
      \put(0.42,0.61){$d_2(j,i')$}%
      \put(0.07,0.72){$c_1(i)$}%
      \put(0.80,0.72){$c_2(j)$}%
      \put(0.47,0.44){\color[rgb]{0.90196078,0,0}$P$}%
    \end{picture}%
  \end{subfigure}\hfill
  \begin{subfigure}[t]{0.5\textwidth}
    \centering
    \setlength{\unitlength}{157bp}%
    \begin{picture}(1,0.928)%
      \put(0,0){\includegraphics[width=\unitlength]{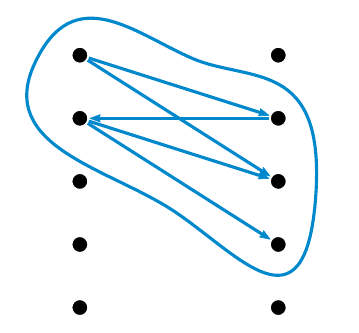}}%
      \put(0.46,0.83){\color[rgb]{0,0.53,0.8}$d'(X) = c_B(X)$}%
    \end{picture}
  \end{subfigure}
  \caption{Representation of the graph $G_B$; a path (left, red) as found in
  Step \ref{lem:bipart:flow:equiv-merge:3}; a cut (right, blue) representing
  the situation in which no path is found}\label{lem:bipart:flow:demand-graph}
\end{figure}

We use the following directed bipartite graph $G_B$ to assist us in the
algorithm and proof: $G_B$ has as vertices two copies of $K$, $V(G_B) = K_1
\cup K_2$, and has arcs $A(G_B) = (K_1 \times K_2) \cup (K_2 \times K_1)$. %
We also impose node capacities for outgoing arcs, with capacity vector $c_1$
for the vertices in $K_1$ and $c_2$ for vertices in $K_2$; we represent the
capacity vector for $G_B$ as $c_B$ (so $c_B(K_1) = c_1$, $c_B(K_2) = c_2$). %
See \Cref{lem:bipart:flow:demand-graph} for a visual representation of the graph.

Our goal is to add demand to the graph in the form of demands on arcs, with
arcs from $K_1$ to $K_2$ representing $\bfd_1$, and arcs from $K_2$ to $K_1$
representing $\bfd_2$. %
We will use $\bfd'$ to represent the demand in the graph (so $\bfd_1 =
\bfd'(K_1,K_2)$, $\bfd_2 = \bfd'(K_2,K_1)$), and $\bfdt$ to represent
leftover demand (initially $\bfd' = 0$, $\bfdt = \bfd$). %
Whenever we say to add some demand to $d_\iota(i,j)$, it is implied that we
add the demand to $d'(i_\iota,j_{3-\iota})$ and remove it from $\dt(i,j)$, as
well as make the same changes to $(j,i)$ (increase $d_\iota(j,i)$ and
$d'(j_\iota, i_{3-\iota})$, decrease $\dt(j,i)$), where $i_1, j_1 \in K_1$,
$i_2, j_2 \in K_2$ are the copies of $i$ and $j$ in $K_1$, $K_2$,
respectively.

We repeat the following steps until $\bfdt = 0$:
\begin{enumerate}
\item if there is $\dt(i,j) > 0$ such that $\bfd_1(i) < c_1(i)$ and $\bfd_1(j)
< c_1(j)$, add to $d_1(i,j)$ a value of $\epsilon := \min(\dt(i,j), c_1(i)-\bfd_1(i), c_1(j)-\bfd_1(j))$;
\item similarly for $\bfd_2$, if there is $(i,j)$ such that $\epsilon :=
 \min(\dt(i,j), c_2(i)-\bfd_2(i), c_2(j)-\bfd_2(j)) > 0$, add $\epsilon$ to
 $d_2(i,j)$; 
\item let $\dt(i,j) > 0$, and assume w.l.o.g.\ that $\bfd_2(i) = c_2(i)$: %
    \begin{enumerate}
    \item  find a path ${P=\paren{i_2 = \ell_0, \ell_1, \ell_2, \ldots
        \ell_p}}$ such that for all $0<r<p$, ${\bfd'(\ell_r) = c_B(\ell_r)}$,
        ${d'(\ell_r, \ell_{r+1}) > 0}$, and for the endpoint, $d'(\ell_p) <
        c_B(\ell_p)$; %
        \label{lem:bipart:flow:equiv-merge:3}
    \item switch demand between $\bfd_1$ and $\bfd_2$ as follows: let %
        $\epsilon := \min\paren[\big]{c_1(i)-\bfd_1(i),{c_B(\ell_p)-\bfd'(\ell_p)},\linebreak \min_r
        d'(\ell_r, \ell_{r+1})};$ %
        for any arc $(x_1, y_2) \in K_1 \times K_2$ (resp. $(x_2, y_1)\in K_2
        \times K_1$) in $P$, decrease $d'(x_1, y_2)$ (resp. $d'(x_2, y_1)$) and
        increase $d'(x_2, y_1)$ (resp. $d'(y_1, x_2)$) by $\epsilon$; %
    \item add $\min(\dt(i,j), \epsilon, c_2(j)-\bfd_2(j))$ to $d_2(i,j)$.
    \end{enumerate}
\end{enumerate}

We will show that such a path always exists as long as there is demand to be
routed, but first, let us show that these operations maintain the invariants
that $\bfd'(i) \leq c_B(i)$, for all $i \in V(G_B)$. %
For the first two steps, $\bfd_1(i)$ (resp.\ $\bfd_2(i)$) increases by at most
$c_1(i)-\bfd_1(i)$ (resp.\ $c_2(i)-\bfd_2(i)$), which maintains the invariant. %
For the third step, notice that every $\ell_i$ except the endpoints has two
arcs in the path, one corresponding to $\bfd_1$ and the other to $\bfd_2$,
hence the decrease on $\bfd_1$ on one edge is compensated by the increase on
the other, and the same for $\bfd_2$. %
Furthermore, as we choose $\epsilon$ to be the smallest value on the path, no
demand can go below $0$. %
Finally, for the endpoints, we know that $\ell_p$ has spare capacity by
definition, and for $i$ we decrease $\bfd_2(i)$ and increase $\bfd_1(i)$, but
$\bfd_2(i) = c_2(i)$ implies that $\bfd_1(i) < c_1(i)$, as $\bfdt(i) +
\bfd_1(i) + \bfd_2(i) = \bfd(i) \leq c(i) = c_1(i) + c_2(i)$ and $\bfdt(i) >
0$.

We will now show by contradiction that a path must exist. %
Assume that the process above cannot complete, i.e.\ there is some demand
$\dt(i,j)>0$ that cannot be placed in $\bfd_1$ or $\bfd_2$, and there is no
path $P$ as specified above. %
Then there is a set $X \subseteq V(G_B)$ of vertices reachable from $i$ by
edges with positive demand in $\bfd'$, all of which have saturated capacity
$\bfd'(\ell_r) = c_B(\ell_r)$. %
Writing $X_1 = X \cap K_1$, $X_2 = X \cap K_2$, we get that $\bfd_1(X_1) =
\sum_{\ell \in X_1} \bfd_1(\ell) = c_1(X_1)$ and $\bfd_2(X_2) = c_2(X_2)$. %
We can furthermore deduce that $\bfd_1(X_1) \leq \bfd_1(X_2)$, since all of
the demand in $\bfd_1$ incident on $X_1$ is represented as demand in an arc
$(X_1,X_2)$, which is thus also incident on $X_2$. %
Similarly, we know that $\bfd_2(X_2) \leq \bfd_2(X_1)$. %
Putting these facts together, we conclude that:
\begin{alignat*}{5}
c_1(X_1) &= \bfd_1(X_1)\, &\leq \bfd_1(X_2) &\leq c_1(X_2) &&\quad\text{and}\quad&
c_2(X_2) &= \bfd_2(X_2)\, &\leq \bfd_2(X_1) &\leq c_2(X_1)
\end{alignat*}
Since $\Scal'_{c_1} = \Scal'_{c_2}$, it must be the case that $c_1(X_1) =
c_1(X_2)$ and $c_2(X_2) = c_2(X_1)$, and thus from the inequalities above we
can conclude that $\bfd_1(X_2) = c_1(X_2)$, and in particular, since $i_2 \in
X_2$, it must be that $\bfd_1(i) = c_1(i)$. %
But this is a contradiction, as $\bfd(i) = \bfdt(i) + \bfd_1(i) + \bfd_2(i) =
\bfdt(i) + c_1(i) + c_2(i) > c(i)$, and thus $\bfd$ would not be routable in $G'$.

We conclude that, as long as $\bfdt \neq \mathbf{0}$, the process above finds
a path and thus makes progress in each iteration. %
Therefore, $\bfd$ is routable in $G$ by splitting it into demands $\bfd_1$ for
$v_1$ and $\bfd_2$ for $v_2$ as described above.
\end{proof}

\section{Sparsifiers for Small Vertex Cover and Integrity}
\label{sec:vi}

The vertex cover number $c$ of a graph $G$ is the size of the smallest set $X$
such that $G-X$ contains no edges. %
The vertex integrity $d$ of a graph extends this by allowing $G-X$ to have
small components, and it is the smallest number for which there is a set $X$
such that $G-X$ has components of size at most $d-|X|$ (i.e.~the size of a
component plus $X$ does not exceed $d$). %
We show that if either of these parameters is bounded, the exponential complexity in the size of the
sparsifier is limited to an \emph{additive} term depending only on
the parameter. %
The result of \Cref{sec:bipart:cut} corresponds to the case of $c =
k$ (or $d = k+1$).

Our formal results are the following:

\begin{restatable}{theorem}{thmVc}
\label{thm:vc}
Let $G = (V,E,u)$ be a network with terminal set $K$ of size $k$.

If $G$ has a vertex cover of size $c$, it has a cut sparsifier of size $k +
2^{c^2}$ and a flow sparsifier of size $k+3^{c^3}$.
\end{restatable}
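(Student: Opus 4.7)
The plan is to reduce to the bipartite case by enlarging the terminal set to include the vertex cover. Let $X$ be a vertex cover of $G$ of size $c$ and set $K' := K \cup X$, so $|K'| \leq k + c$. Let $S := V \setminus K'$. Since $X$ is a vertex cover and $S \cap X = \emptyset$, every edge incident to $S$ has its other endpoint in $X$, and $S$ is an independent set. Partition the edges of $G$ into $E_1$ (those incident on $S$) and $E_2 := E \setminus E_1$ (the remaining edges, whose endpoints all lie in $K'$), and let $G_1$ be the subnetwork on vertex set $S \cup X$ with edges $E_1$, and $G_2$ the subnetwork on $K'$ with edges $E_2$, so that $G = G_1 \uplus G_2$. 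With respect to the terminal set $K'$, the networks $G_1$ and $G_2$ are Steiner-disjoint, because the only non-terminal vertices lie in $S$, which is entirely contained in $V(G_1)$.

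The subnetwork $G_1$ is bipartite with partition $X \uplus S$, where $X$ is the terminal side of size $c$. I would apply \Cref{thm:bipartite} (resp.\ \Cref{thm:bipartite-flow}) to $G_1$ with terminal set $X$ to obtain a cut sparsifier $H_1$ of size $2^{c^2}$ (resp.\ a flow sparsifier of size $3^{c^3}$). The subnetwork $G_2$ has all of its vertices in $K'$, so $H_2 := G_2$ is trivially an exact sparsifier of itself with terminal set $K'$. By \Cref{lem:basic:uplus}, $H := H_1 \uplus H_2$ is a cut (resp.\ flow) sparsifier for $G$ with terminal set $K'$, and since $H_1$ and $H_2$ share exactly the vertices in $X$,
\[|V(H)| \leq |V(H_1)| + |V(H_2)| - |X| \leq 2^{c^2} + (k+c) - c = k + 2^{c^2}\]
(respectively $k + 3^{c^3}$).

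Finally, I would argue that a $K'$-sparsifier is automatically a $K$-sparsifier. For cuts, for any $A \subseteq K$ the vertices of $X \setminus K$ are free to sit on either side of a $K$-cut, so
\[\kappa_G^{(K)}(A) = \min_{T \subseteq X \setminus K} \kappa_G^{(K')}(A \cup T),\]
and the identical identity holds for $H$. Combining these with the termwise equalities $\kappa_G^{(K')}(B) = \kappa_H^{(K')}(B)$ for each $B = A \cup T$ yields $\kappa_G^{(K)}(A) = \kappa_H^{(K)}(A)$. For flows, any demand on $K$ is the same as a demand on $K'$ whose entries outside $K \times K$ vanish; its flow factor depends only on the capacities and the demand vector, not on which vertices are labeled as terminals, so equality of flow factors on $K'$-demands passes directly to equality on $K$-demands.

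The construction is a mostly mechanical assembly of previously established tools, and I do not expect any significant technical obstacles. The main points to verify carefully are the Steiner-disjointness of the $E_1$/$E_2$ decomposition with respect to the enlarged terminal set $K'$, and the harmless restriction from $K'$-sparsification back to $K$-sparsification at the end.
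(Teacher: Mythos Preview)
Your proposal is correct and follows essentially the same approach as the paper: your decomposition $G_1, G_2$ coincides with the paper's $G_S, G_K$ (the vertex sets and edge sets are identical once one unwinds the definitions), and both proofs then apply the bipartite sparsifier theorems to the Steiner part with terminal set $X$ and keep $G[K\cup X]$ intact. Your explicit justification of the restriction from $K'$-sparsification back to $K$-sparsification is slightly more detailed than the paper's, which leaves that step implicit.
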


\begin{proof}
Let $X$ be a vertex cover of size $c$. %
We will start by splitting  the graph into a subgraph with all of the
terminals and another containing only $X$ and non-terminals. %
Let $G_K = G[K \cup X]$ be the first of these graphs, and $G_S = {G[(V\setminus K) \cup X] -
E(X)]}$ be the second. %
Notice that the graphs are Steiner-disjoint for terminal set $K \cup X$, and
thus we can use \Cref{lem:basic:uplus}. %
Furthermore, if we compute a sparsifier $H_S$ for $G_S$ (with terminal set
$X$) and take the sparsifier $H = G_K \uplus H_S$ for $G$, the size of $H$ is
$k + |V(H_S)|$. %
All that remains is to apply \Cref{thm:bipartite} to obtain a cut sparsifier
$H_S$ of size $2^{c^2}$ or \Cref{thm:bipartite-flow} to obtain a flow
sparsifier $H_S$ of size $3^{c^3}$, which completes the proof of the theorem.
\end{proof}

\begin{theorem}
\label{thm:vi-v2}
Let $G = (V,E,u)$ be a network with terminal set $K$ of size $k$.

If $G$ has a separator $X\subseteq V$ of size $a$ and $|C| \leq b$ for every
component $C$ of $G-X$, then it has a sparsifier of size $kb + 4^{b(a+b)^2}$.
\end{theorem}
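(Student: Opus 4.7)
My plan is to generalize the proof of \Cref{thm:vc} (vertex cover case) by applying the polyhedral tool of \Cref{sec:bipart:cut} at the level of small gadgets rather than individual stars.

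First, I would augment the terminal set to $K' := K \cup X$ (of size at most $k+a$) and decompose $G$ into Steiner-disjoint pieces with respect to $K'$: the core $G_0 := G[K \cup X]$ contains all edges with both endpoints in $K \cup X$, and for each component $C$ of $G - X$ we set $G_C := G[C \cup X] - E(G_0)$, which contains the edges incident to $C \setminus K$. By \Cref{lem:basic:uplus} it suffices to sparsify each piece separately (with respect to its induced restriction of $K'$). The resulting composite sparsifier $H$ for $G$ with terminal set $K'$ is automatically a sparsifier for the smaller set $K$, because $\kappa_G(A;K) = \min_{Y \subseteq X} \kappa_G(A \cup Y;K')$ for every $A \subseteq K$ and each term on the right-hand side is preserved by $H$.

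Next, I would keep unchanged every component $C$ with $K \cap C \neq \emptyset$: since the at most $k$ terminals in $V - X$ lie in distinct components, there are at most $k$ such components, each contributing at most $|C \setminus K| \leq b$ Steiner vertices. Combined with the core, these account for the $kb$ summand (and an additive $|K' \setminus K| \leq a$ absorbed into the $4^{b(a+b)^2}$ term).

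The core of the proof is the terminal-free case. For each terminal-free component $C$, the piece $G_C$ is an arbitrary graph on at most $a + b$ vertices, determined by its edge-capacity vector $c \in \Rnn^{N}$ with $N \leq b(a+b)$. Following \Cref{sec:bipart:cut}, I would define a \emph{gadget cone} carved out by inequalities of the form $u(\delta(S)) \leq u(\delta(S'))$ for pairs of cuts $S, S' \subseteq V(G_C)$ that separate the same subset of $X$; there are at most $O(4^{a+b})$ such inequalities, allowing for the three relations $\leq$, $\geq$, $=$ on each pair $(A, B)$ of disjoint subsets plus the non-negativity constraints. A Carath\'{e}odory-style argument then shows that each capacity vector lies in the conic hull of at most $N$ extreme rays of its gadget cone, and the total number of extreme rays across all cones is bounded by $\binom{O(4^{a+b})}{N} \leq 4^{b(a+b)^2}$. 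Grouping terminal-free components by the set of extreme rays used in their conic decomposition and contracting each class into a single representative (via the natural analogs of \Cref{lem:basic:equiv-merge} and \Cref{lem:bipart:equiv-comb}) yields at most $4^{b(a+b)^2}$ vertices for the terminal-free part.

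The principal obstacle is adapting the polyhedral machinery of \Cref{sec:bipart:cut} from linear (single-star) cut behavior to the piecewise-linear (min-over-internal-cut) behavior of a multi-vertex gadget. In particular, the right analog of the set $\Scal_c$ must enumerate not only the subsets $A \subseteq X$ being separated but also the possible assignments of the Steiner vertices of $G_C$ to the two sides of the cut, and one must verify that conic combinations within a gadget cone preserve the underlying minimizing cut choices. Establishing the multi-vertex analogs of \Cref{lem:bipart:star-decomp} and \Cref{lem:bipart:equiv-comb} is the crux of the argument.
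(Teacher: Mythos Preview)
Your proposal follows essentially the same route as the paper: split off the at most $k$ terminal-containing components of $G-X$ (accounting for the $kb$ term), and on the terminal-free part apply the polyhedral machinery of \Cref{sec:bipart:cut} with the cone for a component $C$ carved out by inequalities indexed by triples $(A,B,B')$, $A\subseteq X$, $B,B'\subseteq V(C)$---precisely the paper's definition of $\Scal_c$ and $P_c$ in \Cref{sec:vi}. Your diagnosis of the main obstacle (that the cone must track the internal Steiner assignment $B,B'$ and not just the separated set $A$, so that conic combinations preserve the minimizing internal cut) is exactly the point.

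There is one counting slip in the final step. You propose the contraction-based variant: group components by the \emph{set} of extreme rays used in their Carath\'eodory decomposition and contract each class to a single representative. But the number of such classes is $\binom{|Q|}{N}$ with $N\le b(a+b)$, not $|Q|$; this yields a bound of order $4^{b^2(a+b)^3}$ rather than the claimed $4^{b(a+b)^2}$---the same gap as between the $2^{k^3}$ contraction-based and $2^{k^2}$ basic-star bounds in \Cref{thm:bipartite}. To obtain $4^{b(a+b)^2}$ the paper uses the other construction: replace each component by its conic combination of basic components (the analogue of \Cref{lem:bipart:star-decomp}, stated as \Cref{lem:vi:decomp}), and then merge all copies of the \emph{same} basic component (\Cref{lem:vi:merge}). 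This leaves one component per extreme ray in $Q$, hence at most $b\cdot|Q|\le 4^{b(a+b)^2}$ non-terminal vertices.
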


As a corollary, we get that  small sparsifiers exist when vertex integrity is
bounded.

\begin{corollary}
\label{thm:vi}
Let $G = (V,E,u)$ be a network with terminal set $K$ of size $k$.

If $G$ has a separator $X\subseteq V$ such that $|X| + |C| \leq d$ for every
component $C$ of $G-X$, then it has a cut sparsifier of size $kd + 4^{d^3}$.
\end{corollary}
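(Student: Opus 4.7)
\subparagraph{Proof plan.}
The plan is to lift the polyhedral construction of \Cref{sec:bipart:cut} from individual Steiner stars to the gadgets sitting between $X$ and each component of $G-X$. Let $X$ be the given separator of size $a$, and for every component $C$ of $G-X$ write $K_C := K \cap C$ and $S_C := C \setminus K_C$, so $|S_C| \leq b$. I would first enlarge the terminal set to $K \cup X$ and decompose $G$ Steiner-disjointly as $G[K \cup X] \uplus \biguplus_{C} G_C$, where each $G_C$ collects the edges of $G$ with at least one endpoint in $S_C$; by \Cref{lem:basic:uplus}, a sparsifier of $G$ with terminals $K \cup X$ can be assembled piece-by-piece. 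Each $G_C$ has at most $a+b$ vertices and is a trivial exact sparsifier of itself, but the Steiner-disjoint union $H' := G[K \cup X] \uplus \biguplus_C G_C$ still carries $\sum_C |S_C|$ Steiner vertices, which grows with the number of components and therefore needs further compression.

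The main step is a global compression across components, lifting \Cref{lem:bipart:equiv-comb,lem:basic:equiv-merge} from single stars to whole gadgets. I would assign every component $C$ a \emph{gadget signature} that encodes, for each of its up to $b$ internal Steiner vertices, the polyhedral type of its edge-capacity vector in $\mathbb{R}^{X \cup K_C \cup S_C}$ with respect to the star cones defined by the inequalities $c(A) \leq c(B)$ over all disjoint pairs $A, B \subseteq X \cup K_C$. There are at most $(a+b)^2$ independent constraints per Steiner vertex, each of which can be tight in one of a constant number of orientations, yielding at most $4^{(a+b)^2}$ star types and therefore at most $4^{b(a+b)^2}$ distinct gadget signatures. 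A gadget-level analogue of \Cref{lem:bipart:equiv-comb} then shows that two components sharing a signature agree on every cut of $K \cup X$ after appropriate rescaling of their gadgets, so an analogue of \Cref{lem:basic:equiv-merge} contracts each signature class into a single scaled representative of at most $b$ Steiner vertices. Finally, demoting $X$ back to non-terminal status in the output preserves the cut sparsifier property for $K$, and the residual at most $k+a$ terminal-side vertices together with their wiring to the basic gadgets are absorbed into the additive $kb$ term.

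\textbf{Main obstacle.} The heart of the argument is the gadget-level merging. For a single star, agreement on the inequalities $c(S) \leq c(K-S)$ captures the entire cut behavior, but for a gadget with $b$ interacting Steiner vertices the min-cut is piecewise linear in $b(a+b)$ edge capacities, and two gadgets can agree on all external cuts while their internal wirings differ. The key lemma I would need to establish is that two components whose basic-star decompositions lie in the same cell of the cut-inequality arrangement over $X \cup K_C$ can in fact be merged edge-wise without altering any min-cut value on $K \cup X$. Proving this lemma carefully, together with a tight enumeration of the cells in the arrangement, is what simultaneously yields both the exponent $b(a+b)^2$ and the base~$4$ in the final bound.
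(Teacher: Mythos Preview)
Your decomposition and the overall shape of the argument are close to the paper's, but there are two genuine gaps.

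First, you never isolate the components that contain original terminals. Your merging step compares gadgets $G_{C_1}$ and $G_{C_2}$ via their signatures, but if $K_{C_1} \neq K_{C_2}$ the two gadgets are attached to different subsets of $K$ and cannot be contracted onto one another without identifying distinct terminals; the signature spaces are not even the same. The paper sidesteps this entirely: every component $C$ with $K \cap C \neq \emptyset$ is placed in $G_K$ and kept intact. Since there are at most $k$ such components, each of size at most $b$, this contributes at most $kb$ vertices and is precisely where the additive $kd$ term comes from. Only the remaining, terminal-free components form $G_S$, and for those the relevant terminal set is just $X$, so all of them live over the same ground set and are directly comparable.

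Second, a per-Steiner-vertex ``star type'' does not determine a component's cut behaviour once there are edges between Steiner vertices, as you yourself note in the obstacle paragraph; but you do not actually close this gap. The paper does not attempt a reduction to stars. It treats the entire component as a single point $c \in \mathbb{R}_{\geq 0}^{C \times (C \cup X)}$ and defines a cut cone $P_c$ by the inequalities $x(A \cup B) \leq x(A \cup B')$ for every $A \subseteq X$ and every pair $B, B' \subseteq V(C)$, i.e.\ it compares all possible placements of the component's internal vertices relative to a fixed terminal side $A$. The extreme rays of these cones form the set of basic components, and Carath\'eodory plus a component-level analogue of \Cref{lem:basic:equiv-merge} (the paper's \Cref{lem:vi:merge}) do the merging. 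Your count of ``at most $(a+b)^2$ independent constraints per Steiner vertex'' is not the right bookkeeping: the number of defining inequalities is $2^{a+2b}$, and the exponent $b(a+b)^2$ arises from choosing roughly $b(a+b)$ tight constraints (the ambient dimension of the component's capacity vector) out of those $2^{a+2b}$.
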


The rest of the section is dedicated to proving \Cref{thm:vi-v2}.

We use the same strategy of separating out the terminals, though in
this case we need to show a new bound when $G-X$ has small connected
components. %
We will use similar polyhedral techniques adjusted for the case of small
components.

Let $X \subseteq V$ be a separator of size $a$ such that $G-X$ has
components of size at most $b$. %
Let $\Ccal = \operatorname{cc}(G-X)$, and let $\Ccal_K$ be the set of connected components
containing any terminals. %
We define $G_K = G\brac{X \cup \bigcup \set{C \in \Ccal_K}}$ and $G_S =
G\brac{(V(G) \setminus V(G_K)) \cup X}$. %
The graphs are Steiner-disjoint for terminals $X \cup K$, and $|V(G_K)
\setminus V(G_S)| \leq k b$, thus by \Cref{lem:basic:uplus}, $H = G_K \uplus
H_S$ is a sparsifier for $G$ of size $kb + |V(H_S)|$, where $H_S$ is a
sparsifier for $G_S$. %

All that is left is then to show that $G_S$ with terminal set $X$ has a
sparsifier of size $4^{b(a+b)^2}$. %
Though on stars (components of size 1), there are only two possible cuts for
each subset of terminals, here there are more possibilities for cuts, and thus
we need multiple inequalities for each subset of terminals. %
Therefore, we show that there are at most $4^{b(a+b)^2}$ \emph{basic
components} of size at most $b$, such that any connected component can be
written as the conic combination of these. %
We then replace each component by a conic combination of basic components, and
contract all of the components of the new graph corresponding to the same
basic component.

\subparagraph{Basic components.}
Let $C$ be a component with vertices $v_1, \ldots, v_b$, and let the separator vertices
be ordered $X=\set{t_1, \ldots, t_a}$. %
To simplify the analysis, we consider that every component has size $b$ by
adding isolated vertices if needed. %
We represent the edge capacities for $C$ as a vector $c \in \Rbb_{\geq 0}^{C
\times (C \cup X)}$ (where $c(v_i, v_j) = c(v_j, v_i)$ for all $v_i,v_j \in C$, $c(v_i, v_i) = 0$ for all $v_i \in C$). %

We define $\Scal_c$  to be the collection of triples $(A,B,B')$, $A \subseteq
X$, $B, B' \subseteq V(C)$, such that $c(A \cup B) \leq c(A \cup B')$, i.e.\
$\Scal_c = \set{(A,B,B') \subseteq X \times V(C)^2: c(A \cup B) \leq c(A \cup
B')}$, and the \emph{cut cone} of $c$ to be:
\begin{align*}
P_c := \set[\Big]{x \in \mathbb R_{\geq 0}^{C \times (C \cup X)}:\; 
    &x(A \cup B) \leq x(A \cup B')\;\forall (A,B,B') \in \Scal_c;\; \\
    &x(v_i, v_j) = x(v_j, v_i), x(v_i, v_i) = 0\;\forall v_i,v_j \in V(C)
    }
\end{align*}
We say that a vector $x$ \emph{agrees with $c$} if $c(A \cup B) \leq c(A \cup
B')$ for all $(A,B,B') \in \Scal_c$, and thus $P_c$ is the polyhedron
containing all of the capacity vectors that agree with $c$. %
It is a cone since it is defined by constraints of the form $\alpha^T x \leq 0$. %

We can now define the set $Q$ of extreme rays similarly to the star case, as
the set of all capacity vectors that satisfy $b(a+(b-1)/2)$ many linearly
independent constraints. %
For each $q \in Q$, we can construct a corresponding \emph{basic component}
$C_q$ with vertices ${X \cup \set{v_{q,1}, v_{q,2}, \ldots, v_{q,b}}}$ and
capacities defined according to $q$.

The size of $Q$ is determined by the possible sets of inequalities that define
each of its elements. %
Out of the $2^{a+2b}$, we choose at most $b(a+b/2)$, which gives us an upper
bound of $|Q| \leq 2^{b(a+2b)(a+b)} \leq 4^{b(a+b)^2}$.

The following lemma expresses components as conic combinations of
basic components:

\begin{restatable}[]{lemma}{lemViConicComb}
\label{lem:vi:conic-comb}
Any capacity vector $c$ for a component $C$ can be written as the conic
combination of at most $\ell=b(a+b/2)$ points in $Q$, all of which agree with $c$. %

Formally, there are $q_1, \ldots, q_{\ell} \in Q$, $\lambda(q_1), \ldots, \lambda(q_\ell) \geq 0$ such that :
\begin{align*}
\sum_{i=1}^{\ell} \lambda(q_i) \cdot q_i &= c &&\text{and}&
q_i(A \cup B) &\leq q_i(A \cup B') &\forall&i \in [\ell], (A,B,B')\in \Scal_c
\end{align*}
\end{restatable}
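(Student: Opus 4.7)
The plan is to mimic the proof of \Cref{lem:bipart:conic-comb}, adapting it from the star setting to the small-component setting, with Carathéodory's theorem doing the real work once the dimension is accounted for correctly.

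First, I would note that $c$ is trivially a point of $P_c$: the symmetry and zero-diagonal constraints hold by assumption on $c$, and each defining inequality $x(A\cup B) \leq x(A\cup B')$ is, by definition of $\Scal_c$, one that $c$ satisfies. Thus $c \in P_c$, and $P_c$ is a pointed cone (it lies in $\Rbb_{\geq 0}^{C\times(C\cup X)}$ and contains no line through the origin).

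Next, I would identify the relevant dimension. Although the ambient space $\Rbb^{C \times (C\cup X)}$ has dimension $b(a+b)$, the symmetry constraints $x(v_i,v_j) = x(v_j,v_i)$ and the diagonal constraints $x(v_i,v_i)=0$ cut it down to an effective dimension equal to the number of distinct potential edges of the component, namely $ab + \binom{b}{2} = b\bigl(a + (b-1)/2\bigr) \leq b(a+b/2) = \ell$. Thus $P_c$ sits inside a linear subspace of dimension at most $\ell$.

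Now I would invoke Carathéodory's theorem (see \cite[Sec.~3.14]{ConfortiCZ14}): since $c$ lies in the pointed cone $P_c$ of dimension at most $\ell$, it can be written as a conic combination of at most $\ell$ extreme rays of $P_c$. That is, there exist $q_1,\ldots,q_\ell$ extreme rays of $P_c$ and coefficients $\lambda(q_1),\ldots,\lambda(q_\ell) \geq 0$ with $\sum_{i=1}^{\ell} \lambda(q_i)\cdot q_i = c$. By the definition of $Q$ as the set of extreme rays arising from any such cone (obtained by the intersection of the appropriate number of tight inequalities), every $q_i$ belongs to $Q$. Finally, agreement is automatic: each $q_i$ lies in $P_c$, and every inequality $x(A\cup B) \leq x(A\cup B')$ defining $P_c$ corresponds exactly to a triple $(A,B,B') \in \Scal_c$, so $q_i(A\cup B) \leq q_i(A\cup B')$ for every such triple, as required.

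The only place that needs genuine care is the dimension bookkeeping, because the cone is defined in an ambient space whose nominal dimension is larger than the effective one; without accounting for the symmetry and zero-diagonal equalities, one would get a weaker bound on the number of extreme rays. Once this is pinned down, the conclusion is a direct application of Carathéodory, exactly as in the star case.
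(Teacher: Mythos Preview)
Your proposal is correct and follows essentially the same approach as the paper: both argue that $c \in P_c$, reduce the effective dimension via the symmetry and zero-diagonal equalities to $ab + \binom{b}{2} \leq b(a+b/2)$, apply Carathéodory's theorem to write $c$ as a conic combination of at most that many extreme rays of $P_c$, and observe that extreme rays of $P_c$ lie in $Q$ and automatically satisfy the defining inequalities of $P_c$. Your dimension bookkeeping is in fact cleaner than the paper's (which writes the effective dimension as ``$ab - \binom{b}{2} - b$'', apparently a typo for $b(a+b) - \binom{b}{2} - b$).
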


\begin{proof}
The proof mimics that of \Cref{lem:bipart:conic-comb}, with some slight
adjustments to consider the different definition of $\Scal_c$, $P_c$ and $Q$.

Let $P_c$ be the cut cone corresponding to $c$. %
Notice that $P_c$ always includes equality constraints to ensure symmetry
($x(v_i, v_j) = x(v_j, v_i)$, $x(v_i, v_i) = 0$), and these $\binom{b}{2}+b$
constraints are independent. %
Thus, by Carathéodory's theorem $c$ is a conic combination of at most $\ell=ab
- \binom{b}{2} -b \leq b(a+b/2)$ extreme rays of $P_c$ (which are contained in
$Q$). %
In other words, there are $q_1, \ldots, q_\ell \in P_c \cap Q$ and $\lambda(q_1),
\ldots, \lambda(q_\ell) \geq 0$ such that $c = \sum_{i=1}^{\ell} \lambda(q_i) \cdot
q_i$.

Furthermore, each $q_i$ agrees with $c$ on every cut since it is an extreme
ray of $P_c$, and thus satisfies its inequalities.
\end{proof}

We obtain a sparsifier $H$ for $G$ as follows: %
given a separator $X$, $|X| \leq a$ such that $G-X$ only has connected
components of size at most $b$, we start by partitioning $G$ into $G_K =
G\brac{X \cup \bigcup \set{C \in \Ccal_K}}$ and $G_S = G\brac{(V(G) \setminus
V(G_K)) \cup X}$. %
We then compute $Q$  and the coefficients $\lambda_C$ for each component $C$
of $G_S - X$, in time $2^{O(b(a+b)^2)}$. %
Then, we define the collection of graphs $\Ccal_Q(G_S) = \set{C'_q: q \in Q}$,
where $C'_q = C_q \cdot \sum_{C \in \Ccal} \lambda_C(q)$ is $C_q$ with the
capacities scaled up by $\sum_{C \in \Ccal} \lambda_C(q)$. %
The sparsifier is obtained by doing the Steiner-disjoint union of the graphs
in $\Ccal_Q(G_S)$ and $G_K$, that is, $H = G_K \uplus \biguplus_{C'_q \in
\Ccal_Q(G_S)} C'_q$.

The analysis follows similarly to \Cref{sec:bipart:cut}, by applying
\Cref{lem:vi:conic-comb,lem:vi:sparsifier} to $G_S$.

\begin{restatable}[*]{lemma}{lemViSparsifier}
\label{lem:vi:sparsifier}
Let $G$ be a network with a separator $X\subseteq V(G)$ of size $a$ and $|C| \leq b$ for every
component $C$ of $G-X$.

Then $H = G_K \uplus \biguplus_{C'_q \in \Ccal_Q(G_S)} C'_q$ is a sparsifier
for $G$ of size at most $4^{b(a+b)^2}$.
\end{restatable}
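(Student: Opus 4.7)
The plan is to mirror the proof of \Cref{thm:bipartite}, replacing stars by the components of $G-X$. First, I decompose $G = G_K \uplus G_S$ as a Steiner-disjoint union over the enlarged terminal set $X \cup K$, which by \Cref{lem:basic:uplus} reduces the problem to sparsifying $G_S$ with terminal set $X$ while keeping $G_K$ intact; the latter contributes the $kb$ additive term since $|V(G_K) \setminus X| \leq kb$.

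Next, to sparsify $G_S$, I apply \Cref{lem:vi:conic-comb} to each connected component $C$ of $G_S - X$, obtaining a decomposition $c_C = \sum_{q \in Q} \lambda_C(q) \cdot q$ of its capacity vector into at most $\ell = b(a+b/2)$ basic components, each agreeing with $c_C$ on every triple in $\Scal_{c_C}$. I then prove a component-level analog of \Cref{lem:bipart:star-decomp}: replacing $C$ by the Steiner-disjoint union of scaled copies $C_q \cdot \lambda_C(q)$ attached to $X$ preserves the min-cut value for every $A \subseteq X$, because every cut-defining inequality in $\Scal_{c_C}$ points the same direction for $C$ as for each basic component appearing in its decomposition. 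The argument itself runs as in \Cref{lem:bipart:star-decomp}, using reflexivity of sparsification together with iterated applications of \Cref{lem:basic:equiv-merge}.

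Once every component has been so decomposed, I aggregate: for each $q \in Q$, collapse all scaled copies of $C_q$ arising from different original components into the single copy $C'_q = C_q \cdot \sum_C \lambda_C(q)$. This is the component-level analog of the star-contraction step in \Cref{thm:bipartite}, justified by an analog of \Cref{lem:bipart:equiv-comb}: two capacity vectors written as conic combinations of the same set of basic components $\{q_1, \dots, q_\ell\}$ must agree on every triple in $\Scal$, and hence by \Cref{lem:basic:equiv-merge} applied to matched vertex pairs their components may be contracted into one without disturbing any min-cut separating subsets of $X$. The final sparsifier $H = G_K \uplus \biguplus_{q \in Q} C'_q$ thus has at most $kb + b \cdot |Q| \leq kb + 4^{b(a+b)^2}$ vertices.

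The main obstacle is establishing the component-level analog of \Cref{lem:basic:equiv-merge}: unlike a single star center, a min-cut separating $A \subseteq X$ may partition a component $C$ along any internal bipartition $(B, B')$ rather than wholly including or excluding it. To merge two such components one must specify a canonical matching between the internal vertices of scaled copies of the same basic component, and verify that agreement across the \emph{full} inequality system $\Scal_{c_C}$ (not just the separator-subset inequalities) forces some optimal cut to keep every matched pair on the same side. This reorganization is precisely what the triples $c(A \cup B) \leq c(A \cup B')$ defining $P_c$ were tailored to enable, so the decomposition from \Cref{lem:vi:conic-comb} closes the argument cleanly; verifying this matching carefully is the delicate step.
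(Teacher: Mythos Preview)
Your plan matches the paper's proof almost exactly: decompose each component via \Cref{lem:vi:conic-comb}, replace it by a Steiner-disjoint union of scaled basic components (the paper packages this as \Cref{lem:vi:decomp}), and then contract all scaled copies of the same $q\in Q$ using a component-level merge lemma that identifies matched vertex pairs $v_iw_i$ (the paper's \Cref{lem:vi:merge}); you have also correctly isolated this last lemma as the place where the triples $(A,B,B')$ in $\Scal_c$ do real work. One small slip: you cite an analog of \Cref{lem:bipart:equiv-comb} to justify the final aggregation, but that lemma is irrelevant here---two scaled copies of the \emph{same} $q$ have proportional capacity vectors and hence identical $\Scal$-sets automatically, so only the merge lemma is needed; the equiv-comb style argument would only enter if you were pursuing the alternative contraction-based construction that merges \emph{original} components sharing conic support.
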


\section{Reduction Theorem for Bounded-Treewidth Graphs}
\label{sec:btw}

In this section, we show that given a graph $G$ with $k$ terminals and a tree
decomposition for $G$ of width $w$, we can compute in linear time a sparsifier
for $G$ with size linear in $k$. %
In particular, by using as a black-box algorithm for computing a sparsifier
with size $S(k)$ and quality $g(k)$, we obtain a sparsifier of size $O(k)
\cdot S(2w)$ and quality $g(2w)$, which can be computed with $O(k)$ calls to
the original sparsifier algorithm. %
The result is formalized in \Cref{thm:reduction-tw}.

\begin{theorem}
\label{thm:reduction-tw}
Let $G = (V,E,u)$ be a network with terminal set $K$ of size $k$.
Let $S: \Nbb \to \Nbb$, $g \colon \Nbb \to \Rbb_{\geq 1}$ be functions such that every network (of treewidth $w$) has a quality-$g(k)$ cut (resp.\ flow) sparsifier of size $S(k)$.

Then any network of treewidth at most $w$ has a cut (resp.\ flow) sparsifier with quality $g(2w)$ and size $O(k) \cdot S(2w)$. %

Furthermore, given a tree decomposition of width $w$, the sparsifier can be
computed in time $O(nw)$ plus $O(k)$ calls to an algorithm computing
sparsifiers on edge-disjoint subgraphs of $G$ with at most $2w$ terminals.
\end{theorem}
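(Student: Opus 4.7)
My plan is to partition $G$ into $O(k)$ edge-disjoint subgraphs via the tree decomposition $(T,\mathcal B)$, apply the black-box sparsifier to each piece, and combine the results using the splicing lemma (\Cref{lem:basic:uplus}).

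First, I root $T$ at an arbitrary node and, if needed, preprocess it so that each node has at most two children; this preserves the width $w$ and can be done in $O(nw)$ time. Then I traverse $T$ bottom-up, maintaining for each processed node $i$ the number of terminals in $B(T_i)$ not yet assigned to a closed piece. Whenever this count reaches roughly $w-1$, I mark the edge from $i$ to its parent as a cut edge, close the subtree rooted at $i$ (minus any sub-pieces already closed) as a new piece, and reset the counter. This guarantees that every piece contains at most $w-1$ \emph{internal} terminals; since each closed piece consumes at least one terminal, only $O(k)$ pieces can be produced in total.

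Next, for each piece $P_j$ (a connected component of $T$ after removing the cut edges), I form the subgraph $G_j$ with edge set $E(P_j) = \bigcup_{i \in P_j} E_i$ and vertex set $B(P_j)$. The running intersection property of the tree decomposition guarantees that whenever two pieces $G_a$ and $G_b$ are separated by a cut edge $(i,i')$, the vertices they share must lie in $B_i \cap B_{i'}$. Defining the augmented terminal set $K' := K \cup \bigcup_{(i,i') \in E_c} (B_i \cap B_{i'})$, the pieces $G_j$ are therefore pairwise Steiner-disjoint with respect to $K'$. The terminal set of each $G_j$, namely $K' \cap V(G_j)$, consists of at most $w-1$ internal terminals together with boundary vertices contributed by its incident cut edges; the binarization ensures each piece has only $O(1)$ cut edges on its boundary, so with suitable tuning the terminal count per piece is at most $2w$.

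Finally, I apply the black-box algorithm to each $G_j$ with terminal set $K' \cap V(G_j)$, obtaining a quality-$g(2w)$ sparsifier $H_j$ of size at most $S(2w)$. Iterating \Cref{lem:basic:uplus} produces $H' := \biguplus_j H_j$, a quality-$g(2w)$ sparsifier for $G$ with terminals $K'$, of total size $O(k) \cdot S(2w)$. Since $K \subseteq K'$, any cut (or flow) value between subsets of $K$ is simply the optimum of the corresponding $K'$-valued quantity over the possible placements of vertices in $K' \setminus K$; consequently, the quality-$g(2w)$ guarantee with respect to $K'$ automatically yields a quality-$g(2w)$ sparsifier with respect to $K$. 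The main technical obstacle will be calibrating the cut-edge selection---via the bottom-up threshold and the binarization---so that each piece simultaneously satisfies the $2w$-terminal bound and the pieces together remain only $O(k)$ in number, since a piece incident to many cut edges could otherwise acquire an unbounded boundary.
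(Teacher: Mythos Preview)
Your high-level strategy—partition $T$ into $O(k)$ pieces, sparsify each with the black box, glue via \Cref{lem:basic:uplus}—matches the paper's, but the step you yourself flag as ``the main technical obstacle'' is a genuine gap, not a calibration detail. Your bottom-up terminal count bounds the \emph{internal} terminals of a piece, but not the number of cut edges on its lower boundary; binarizing $T$ limits the degree of each \emph{node}, not of each \emph{piece}. Concretely, take a long spine in $T$ with $m$ side-subtrees hanging off it, each side-subtree containing exactly $w-1$ terminals: every side-subtree is closed immediately, the spine carries count $0$ throughout, and the spine becomes a single piece with $m$ cut edges below it and hence an interface of size $\Theta(mw)$. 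No setting of the threshold repairs this, since the spine has no internal terminals to trigger a cut.

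The paper avoids the issue by a different partition. Pick one tree node per terminal and close this set under lowest common ancestors to obtain $Y$ with $|Y|\le 2k-1$; the regions are the components of $T-Y$, grouped by their neighbours in $Y$. LCA-closure forces each region to border at most two nodes of $Y$ (if it bordered three, two of them would have their LCA inside the region, contradicting closure of $Y$), so the interface of each $G[R]$ lies in at most two adjacent-bag intersections and has size at most $2w$. The sparsifier is then $H=G[Y]\uplus\biguplus_{R\in\Rcal(T,Y)} H_R$, of size $O(k)(w+1)+O(k)\cdot S(2w)$. Replacing your terminal-threshold scheme with this LCA-based choice of anchor nodes is exactly the missing idea; once you have it, no counting or threshold tuning is needed.
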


As an immediate consequence of \Cref{thm:reduction-tw} and work on flow
sparsifiers~\cite{CharikarLLM10,EnglertGKRTT14,MakarychevM10}, we obtain the
following corollary:

\begin{corollary}
Any network with treewidth at most $w$ has a quality-$O(\log w)$ flow sparsifier of size $O(k \cdot w)$, %
which can be computed in polynomial time.
\end{corollary}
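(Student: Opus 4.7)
The plan is to derive the corollary as an immediate consequence of \Cref{thm:reduction-tw} by instantiating its black-box subroutine with a known constructive flow sparsifier on terminals. Specifically, I would invoke the polynomial-time algorithms of~\cite{CharikarLLM10,EnglertGKRTT14,MakarychevM10}, which realize the existential bound of Leighton and Moitra~\cite{LeightonM10/stoc}: for every $k$-terminal network, they produce a flow sparsifier supported on only the $k$ terminals with quality $\bigO\paren[\big]{\frac{\log k}{\log\log k}}$. In the notation of \Cref{thm:reduction-tw}, this gives $S(k) = k$ and $g(k) = \bigO\paren[\big]{\frac{\log k}{\log\log k}}$.

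Plugging these into \Cref{thm:reduction-tw} yields a flow sparsifier of size $\bigO(k) \cdot S(2w) = \bigO(k\cdot 2w) = \bigO(k\cdot w)$ and quality $g(2w) = \bigO\paren[\big]{\frac{\log(2w)}{\log\log(2w)}} = \bigO(\log w)$, exactly matching the statement of the corollary (and in fact yielding the slightly stronger bound $\bigO\paren[\big]{\frac{\log w}{\log\log w}}$ mentioned in the introduction).

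For the runtime, \Cref{thm:reduction-tw} itself contributes $O(nw)$ time for the greedy tree-partitioning step together with $\bigO(k)$ calls to the black-box on subgraphs with at most $2w$ terminals, each of which runs in polynomial time by the cited works. A tree decomposition of width $\bigO(w)$ can be obtained, for instance, by Korhonen's algorithm~\cite{Korhonen21} in $2^{O(w)} \cdot n$ time or by a polynomial-time treewidth approximation, so the overall construction is polynomial in the input size.

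There is no substantive technical obstacle specific to this corollary: all of the heavy lifting is packaged inside \Cref{thm:reduction-tw}, and the corollary reduces to choosing the correct black-box together with the routine asymptotic simplification $\log(2w)/\log\log(2w) = \bigO(\log w)$.
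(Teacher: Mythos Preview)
Your proposal is correct and matches the paper's own treatment: the corollary is stated there as ``an immediate consequence of \Cref{thm:reduction-tw} and work on flow sparsifiers~\cite{CharikarLLM10,EnglertGKRTT14,MakarychevM10},'' i.e.\ exactly the instantiation $S(k)=k$, $g(k)=O(\log k/\log\log k)$ that you spell out. Your added remarks on obtaining the tree decomposition and on the sharper $O(\log w/\log\log w)$ quality go slightly beyond what the paper writes but are consistent with it.
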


Let $(T, \Bcal)$ be a tree decomposition for $G$ of width $w$. %
We recall that we associate each edge with a single bag, and that we assume
that no two identical bags exist.

Let $Y\subseteq V(T)$ be a subset of bags obtained as follows: first, for each
terminal $t \in K$, add to $Y$ a node $u \in V(T)$ containing $t$, i.e.~$t \in
B_u$; then add the lowest common ancestors of any two nodes $Y$ to $Y$ as well. %

The algorithm first constructs the set $Y$, and then partitions $T$ into a set
of regions $\Rcal(T,Y)$ as follows: consider the components of $T-Y$, and
group them into regions according to their neighboring nodes. %
Finally, it returns the sparsifier $H = G[Y] \uplus \biguplus_{R \in
\Rcal(T,Y)} H_R$, where $H_R$ is a sparsifier for the subgraph $G[R]$ with
terminal set $G[R] \cap B(Y)$, computed using the black-box algorithm.

Since $T$ has $O(n)$ nodes, constructing $Y$ as well as the graphs $G[R]$ for
every region can be done in time $O(nw)$, and computing the sparsifiers simply
requires $|\Rcal(T,Y)|$ calls to the given sparsifier algorithm. %
We remark that the calls to the sparsifier algorithm are run on subgraphs
$G[R]$ for disjoint $R$, and thus induce edge-disjoint subgraphs of $G$.

We will now show that the size of $Y$ and $\Rcal(T,Y)$ is bounded, before
showing that each $G[R]$ has a small sparsifier, and that all of these can be
joined into $H$.

\begin{restatable}[]{lemma}{lemBtwRegions}
\label{lem:btw:regions}
There are at most $2k$ nodes in $Y$ and $2|Y|$ regions in $\Rcal(T,Y)$, with
each region neighboring at most two nodes of $Y$.
\end{restatable}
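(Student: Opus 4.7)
The plan is to prove the three claims in sequence, after rooting $T$ at an arbitrary node so that ``ancestor'' and ``descendant'' are well-defined. The unifying tool is the defining property that $Y$ is closed under LCAs of its members, from which all three bounds follow by essentially structural/combinatorial arguments on trees.

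For the bound $|Y| \leq 2k$, my plan is to invoke the classical LCA-closure fact: the closure of any $k$-element subset of a rooted tree under pairwise LCAs has size at most $2k-1$. The cleanest justification is to interpret $Y$ as the vertex set of the minimal Steiner subtree of $T$ spanning the $k$ initial nodes (one chosen per terminal), after suppressing degree-$2$ internal vertices; the resulting reduced tree has at most $k$ leaves (one per initial node) and hence at most $k-1$ branching vertices of degree $\geq 3$, giving $|Y| \leq 2k-1 \leq 2k$. This step is routine.

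For the ``at most two neighbors per region'' claim, fix an arbitrary component $C$ of $T-Y$ inside a region $R$. The neighbors of $C$ in $Y$ consist of at most one \emph{ancestor-side} neighbor (the parent in $T$ of the topmost node of $C$) plus some \emph{descendant-side} neighbors (children in $T$ of nodes of $C$ that lie in $Y$). The heart of the argument, and the step I expect to be the main obstacle, is to rule out two descendant-side neighbors: if $y, y' \in Y$ are two such, with $y$ a child of $c \in C$ and $y'$ a child of $c' \in C$, then $\mathrm{LCA}_T(y, y') = \mathrm{LCA}_T(c, c')$, which lies on the $c$-$c'$ path in $T$. Since $C$ is connected, this path is contained in $C$, putting $\mathrm{LCA}(y, y') \in C \subseteq V(T) \setminus Y$, contradicting closure of $Y$ under LCAs. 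The bound then transfers from components to $R$, since regions are unions of components sharing the same neighbor set.

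To count the regions, I observe that since $T$ is connected and $Y \neq \emptyset$, every component (and hence every region) has a neighbor set in $Y$ of size $1$ or $2$, and regions are uniquely identified by their neighbor sets. Neighbor sets of size $1$ contribute at most $|Y|$ regions. For a neighbor set $\{y, y'\}$ of size $2$, the two nodes must be in ancestor--descendant relation and there can be no $Y$-node strictly between them on the $y$-$y'$ path (otherwise the region would be split); hence $y$ must be the nearest $Y$-ancestor of $y'$, an assignment that produces at most one such pair per $y' \in Y$. Uniqueness of the region for each such pair follows because two distinct components both having $\{y, y'\}$ as their $Y$-neighbors would provide two internally disjoint $y$-$y'$ paths in $T$, contradicting that $T$ is a tree. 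Summing, the number of regions is at most $|Y| + |Y| = 2|Y|$, completing the proof.
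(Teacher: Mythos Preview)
Your proposal is correct and follows essentially the same approach as the paper: both use LCA-closure to bound $|Y|$ by $2k-1$, both rule out three $Y$-neighbors of a region/component via the observation that the LCA of two of them would have to lie inside the component (contradicting LCA-closure of $Y$), and both count regions by splitting into the size-$1$ and size-$2$ neighbor-set cases, charging the latter to the lower endpoint. Your presentation is slightly cleaner in that you argue per component and classify neighbors as ancestor-side versus descendant-side, whereas the paper argues directly on regions via a contracted tree $T'$; the only minor point to watch is that your identity $\mathrm{LCA}(y,y')=\mathrm{LCA}(c,c')$ tacitly assumes neither of $y,y'$ is an ancestor of the other, but that case is already excluded since the $c$--$c'$ path would then pass through a $Y$-node.
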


\begin{proof}%
Let $T'$ be the tree obtained from $T$ by iteratively contracting every edge
that does not connect two nodes in $Y$. %
$T'$ has (at most) one node for every terminal in $K$, plus nodes for
the lowest common ancestors. %
In particular, the nodes for the lowest common ancestors have at least
2 children, as they were added to $Y$ because there are terminals in two of its
children subtrees. %
As there are at most $k$ nodes with at most 1 child, $T'$
must have at most $2k-1$ nodes.

We now show that every region neighbors at most two nodes of $Y$. %
Assume that there is a region $R$ neighboring more than 2 nodes in $Y$. %
If there are two nodes $u,v$ such that one is not the ancestor of the other,
then their lowest common ancestor $a$ is also in $Y$ and in $R$, since the
$u$-$v$-path in $T$ is contained in $R$. %
But then $a$ is a cut in $T$ which splits $R$ into smaller regions, and thus
$R$ cannot be a region in $\Rcal(T,Y)$. %
The remaining possibility is that the nodes neighboring $R$ all are ancestors
or descendants of one another, and thus are all contained in a root-leaf path
in $T$. %
However, in that case, the middle points (not the highest or lowest neighbor
of $R$) are all vertex cuts which would split $R$ into smaller regions. We
conclude that each region neighbors at most two nodes of $Y$, and that one is
the parent of the other in $T'$.

As a consequence of the above, there can be at most $|Y|$ regions that have only
a single node as neighbor, and $|Y|$ regions that have two neighbors in $Y$ (a
node $u$ and its parent).
\end{proof}

We now show that we can apply a sparsifier to $G[R]$ to obtain a good sparsifier.

\begin{restatable}[]{lemma}{lemBtwSparsifier}
\label{lem:btw:sparsifier}
Given $R \in \Rcal(T,Y)$, its induced subgraph $G[R]$ with terminal set $G[R]
\cap B(Y)$ has a sparsifier $H_R$ of quality $g(2w)$ and size $S(2w)$.
\end{restatable}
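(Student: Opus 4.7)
The plan is to apply the hypothesized black-box sparsifier algorithm directly to $G[R]$ with terminal set $T_R := V(G[R]) \cap B(Y)$. Since the algorithm guarantees quality $g(|T_R|)$ and size $S(|T_R|)$, the entire task reduces to showing $|T_R| \leq 2w$.

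By \Cref{lem:btw:regions}, $R$ has at most two neighbors in $Y$; call them $u$ and $v$ (possibly just $u$). The central step is to prove $T_R \subseteq B_u \cup B_v$. Fix any $x \in T_R$: there exist $i \in R$ and $j \in Y$ with $x \in B_i \cap B_j$. The subtree property of tree decompositions ensures that $x$ appears in every bag on the unique $i$--$j$ path of $T$. Since $R$ is a connected component of $T - Y$ (after grouping components by their neighboring $Y$-nodes), this path must leave $R$ through a node of $Y$ adjacent to $R$; by \Cref{lem:btw:regions}, such a node is either $u$ or $v$. Hence $x \in B_u \cup B_v$.

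With this confinement established, each bag having at most $w+1$ vertices gives $|T_R| \leq |B_u \cup B_v| \leq 2(w+1)$. The slightly stronger bound $|T_R| \leq 2w$ follows from noting that the vertices of $B_u \cap B_v$ must lie on every bag along the path from $u$ to $v$ through $R$ (again by the subtree property), producing enough overcount in $|B_u| + |B_v|$ to be absorbed when taking the union. Applying the black-box sparsifier algorithm to $G[R]$ with terminal set $T_R$ then yields the desired $H_R$ of size $S(2w)$ and quality $g(2w)$.

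The main obstacle is the confinement step, namely pinning down that the interface between $G[R]$ and the rest of $G$ runs only through the two boundary bags $B_u$ and $B_v$, and that no other bag of $Y$ can contribute a terminal through a shared vertex with $G[R]$. This is a purely structural tree-decomposition argument, but careful use of the subtree property together with the definition of the regions in $\Rcal(T,Y)$ is essential; without it one would have no handle on $|T_R|$ and the recursion on terminal count would collapse.
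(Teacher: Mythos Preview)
Your overall approach matches the paper's: reduce to bounding $|T_R|$, then invoke the black box. Your confinement step $T_R \subseteq B_u \cup B_v$ is correct and is essentially the paper's first move. However, your sharpening from $2(w+1)$ to $2w$ does not work. Knowing that every vertex of $B_u \cap B_v$ lies in all bags along the $u$--$v$ path says nothing about the \emph{size} of $B_u \cap B_v$; in particular $B_u$ and $B_v$ may be disjoint, in which case $|B_u \cup B_v| = 2(w+1)$ and your inequality chain stalls at $2w+2$. Concretely, take a path decomposition whose bags slide from $\{1,\dots,w{+}1\}$ to $\{w{+}2,\dots,2w{+}2\}$, with $y_1,y_2\in Y$ at the two ends and identical adjacent bags inside $R$; then $T_R$ genuinely has $2w+2$ vertices.

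The paper closes the gap with a different idea that your argument never invokes: the standing assumption that no two bags are identical. In the two-neighbor case the paper first notes that $R$ must be a \emph{single} component of $T-Y$ (the unique $y_1$--$y_2$ path in $T$ pins it down), so each $y_i$ has a unique neighbour $u_i\in R$. Your own path argument actually shows more than you extract from it: the $i$--$j$ path in $T$ crosses a boundary edge $u_iy_i$, so $x\in B_{u_i}\cap B_{y_i}$, not merely $x\in B_{y_i}$. Hence $T_R \subseteq (B_{y_1}\cap B_{u_1}) \cup (B_{y_2}\cap B_{u_2})$. Now the no-identical-bags assumption forces $|B_{y_i}\cap B_{u_i}|\le w$ for each $i$, giving $|T_R|\le 2w$. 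Without that assumption the bound $2w$ is simply false, as the example above shows; the ``overcount'' you allude to is not there.
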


\begin{proof}%
As $G$ has treewidth $w$, any of its subgraphs has treewidth $w$ as well. %
Thus, we only need to prove that $|G[R] \cap B(Y)| \leq 2w$, as we assume that
every graph of treewidth $w$ has a quality-$g(k)$ sparsifier of size $S(k)$,
with $k = |G[R] \cap B(Y)|$ for $G[R]$. %

If $R$ has only one neighbor $y \in V(T)$, then $|G[R] \cap B(Y)| = |G[R] \cap
B_y| \leq w+1 \leq 2w$, as the only neighboring bag to $R$ is $y$ and thus
$G[R]$ and $G[T-R]$ only intersect in $B_y$.

Otherwise, $R$ has two neighbors $y_1, y_2 \in V(T)$. %
Let $u_1$, $u_2$ be the nodes of $R$ neighboring $y_1$, $y_2$, respectively.
Notice that $R$ must correspond to a single connected component of
$T-Y$, as there is a single path connecting $y_1$ and $y_2$ in $T$,
and thus $y_1$ and $y_2$ each have a single neighbor in $R$. %

By the properties of the tree decomposition, any vertex $v$ that is
simultaneously in $G[R]$ and $G[T-R]$ must either be contained in $B_{y_1}$
and $B_{u_1}$, or be contained in $B_{y_2}$ and $B_{u_2}$, %
as the subtree induced by the bags containing $v$ must connect $R$ and $T-R$
and thus must contain the edge $y_1u_1$ or the edge $y_2u_2$. %
Since no two bags are the same, $\card{B_{y_1} \cap B_{u_1}} \leq w$,
$\card{B_{y_2} \cap B_{u_2}} \leq w$, and thus $|G[R] \cap B(Y)| \leq
\card{B_{y_1} \cap B_{u_1}} + \card{B_{y_2} \cap B_{u_2}} \leq 2w$.
\end{proof}

We complete the proof by taking the sparsifier $H = G[Y] \uplus \biguplus_{R \in
\Rcal(T,Y)} H_R$. %
$H$ is a sparsifier for $G$ because $G = G[Y] \uplus \biguplus_{R \in
\Rcal(T,Y)} G[R]$ and, by \Cref{lem:btw:sparsifier}, each $H_R$ is a
sparsifier for $G[R]$, thus by (repeated application of)
\Cref{lem:basic:uplus}, $H$ is a sparsifier for $G$ with terminal set $B(Y)
\supseteq K$. %
The size of $H$ is at most $2k(w+1) + 2kS(2w) = O(k) \cdot S(2w)$, by using
\Cref{lem:btw:regions} and it can be computed in running time
$O(nw)$ plus $O(k)$ calls to edge-disjoint subgraphs of $G$. %

\bibliography{main}

\clearpage
\appendix

\section{Remaining Proofs}

\lemBasicTransitivity*
\begin{proof}
Let $A \subseteq K'$. By the definition of cut sparsifiers, we have that:
\begin{alignat*}{2}
\mc_G(A) &\leq \mc_H(A) &&\leq q\cdot\mc_G(A), \\
\mc_H(A) &\leq \mc_L(A) &&\leq r\cdot\mc_H(A).
\end{alignat*}
Putting both inequalities together, this implies that
\begin{alignat*}{2}
\mc_G(A) &\leq \mc_L(A) &&\leq qr\cdot\mc_G(A).
\end{alignat*}

Similarly, for flow sparsifiers, we know that for every demand $\bfd$ for
terminals $K'$, $\lambda_L(\bfd) \leq \lambda_H(\bfd) \leq \lambda_G(\bfd)$. %
On the other hand, $\lambda_G(\bfd) \leq q \cdot \lambda_H(\bfd) \leq qr \cdot
\lambda_L(\bfd)$, thus completing the proof.
\end{proof}

\lemBasicUplus*
\begin{proof}
We assume that $K = V(G_1) \cap V(G_2)$. %
Otherwise, for the purpose of the analysis, we can add to $G_1$ and $H_1$ the
terminals in $K \setminus V(G_1)$ as isolated vertices (and the ones in $K
\setminus V(G_2)$ to $G_2$ and $H_2$). %
This preserves the properties of the sparsifiers $H_1$, $H_2$ and does not
affect the graphs $G$ or $H$, and thus the lemma holds even without the assumption.

Let $A \subseteq K$ and $X = \mc_G(A)$. %
Then, $u_G(X) = u_{G_1}(X \cap V(G_1)) + u_{G_2}(X \cap
V(G_2))$, as every edge with a Steiner endpoint appears only in one of the
terms, and edges appearing in both terms satisfy $u_G(e) = u_{G_1}(e) +
u_{G_2}(e)$. %
Using the properties of $H_1$ and $H_2$, we get that for $Y_1 = \mc_{H_1}(A)$, $Y_2=\mc_{H_2}(A)$,
\begin{align*}
\kappa_H(A)
&\leq u_H(Y_1 \cup Y_2) \\
&= u_{H_1}(Y_1) + u_{H_2}(Y_2) \\
&\leq u_{G_1}(X \cap V(G_1)) + u_{H_2}(X \cap V(G_2)) \\
&= u_{G}(X),
\end{align*}
and thus $\kappa_H(A) \leq \kappa_G(A)$.

For the other direction, we claim that $X \cap V(G_1)$ and $X \cap V(G_2)$
have to be optimal cuts for $A$ in $G_1$, $G_2$, respectively. %
Note that $V(G_1) \cap V(G_2) = K$, and so $X \cap V(G_1) \cap V(G_2) = X \cap K = A$. %
Thus, if (w.l.o.g.) $X \cap V(G_1)$ were not optimal for $G_1$, we could replace
$X \cap (V(G_1) \setminus K)$ with a better cut, and thus obtain a better cut $X$
for $G$. %

We conclude that
\begin{align*}
\kappa_G(A)
&= u_{G_1}(X \cap V(G_1)) + u_{G_2}(X \cap V(G_2)) \\
&\leq q \cdot \kappa_{H_1}(A) + q \cdot \kappa_{H_2}(A) \\
&\leq q \cdot u_{H_1}(Y \cap V(G_1)) + q \cdot u_{H_2}(Y \cap V(G_2)) \\
&= q \cdot u_{H}(Y),
\end{align*}
where $Y$ is a min-cut for $A$ in $H$.
\end{proof}

\lemBasicEquivMerge*
\begin{proof}
Let $A \subseteq K$ be any subset of terminals. %
We need to show that $\kappa_G(A) = \kappa_{G/vw}(A)$. %

To start, notice that any cut in $G / vw$ induces a cut on $G$ with the same
capacity (where if the combined vertex $vw$ is in the cut, we include both $v$
and $w$ in $G$). %
Thus, $\kappa_G(A) \leq \kappa_{G/vw}(A)$, as the min-cut in $G/vw$ induces a
cut in $G$.

Consider now a min-cut separating $A$ in $G$. We will show that $G/vw$ has a
cut of the same capacity, thus showing that $\kappa_G(A) \geq \kappa_{G/vw}(A)$. %
By assumption, there is a min-cut $X$ separating $A$ that either contains both
$v$ and $w$ or $X \cap \set{u,v} = \emptyset$. %
In the first case, we take $X'=X - \set{v,w} + \set{vw}$ as the cut in $G/vw$,
and in the second case we simply take $X'=X$. %
Either way, $u_{G/vw}(X') = u(X)$, completing the proof.
\end{proof}

\lemViSparsifier*
\begin{proof}
We recall the definitions of $\Ccal_Q(G_S) = \set{C'_q: q \in Q}$ and $C'_q =
C_q \cdot \sum_{C \in \Ccal} \lambda_C(q)$. %
We already know that $|Q| \leq 4^{b(a+b)^2}$, and thus $H$ has size at most $4^{b(a+b)^2}$.

The main tool for the proof is the following lemma, whose proof is similar to \Cref{lem:basic:equiv-merge}.

\begin{restatable}{lemma}{lemViMerge}
\label{lem:vi:merge}
Let $C_v$, $C_w$ be components of $G-X$ of size $b$ with capacity vectors $c_v$, $c_w$, respectively, and vertices $V(C_v) = \set{v_1,
\ldots, v_b}$, $V(C_w) = \set{w_1, \ldots, w_b}$.

If $C_v$ agrees with $C_w$, i.e.~$\Scal_{C_v} \supseteq \Scal_{C_w}$, then the
graph $G / (C_v,C_w)$, obtained by contracting each pair of vertices $v_iw_i$,
is a sparsifier for $G$.
\end{restatable}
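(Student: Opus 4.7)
The plan is to extend the single-vertex merging argument of \Cref{lem:basic:equiv-merge} to the simultaneous component-wise contraction of the $b$ pairs $v_iw_i$. For any $A' \subseteq K$, I aim to show $\kappa_G(A') = \kappa_{G/(C_v, C_w)}(A')$ in both directions.

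The inequality $\kappa_G(A') \leq \kappa_{G/(C_v,C_w)}(A')$ should be straightforward. Any cut $Y$ in $G/(C_v,C_w)$ separating $A'$ lifts to a cut of $G$ by splitting each merged vertex $v_iw_i$ into $v_i$ and $w_i$ placed on the same side as $v_iw_i$. Because $C_v$ and $C_w$ are distinct components of $G - X$, there are no edges between $V(C_v)$ and $V(C_w)$ in $G$, so the capacity of each contracted edge is exactly the sum of the two corresponding edges in $G$ and the lifted cut has identical total capacity.

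The reverse direction is the core of the argument. I would take an arbitrary min-cut $X^*$ of $G$ separating $A'$ and exhibit one of the same capacity that is \emph{aligned}, meaning $v_i \in X^* \iff w_i \in X^*$ for all $i$; an aligned min-cut projects cleanly to $G/(C_v, C_w)$ with the same capacity by the same edge-summing argument as in the easy direction. Set $A = X^* \cap X$, $B_v = X^* \cap V(C_v)$, $B_w = X^* \cap V(C_w)$, and for $B \subseteq V(C_v)$ write $B^w \subseteq V(C_w)$ for its image under the index bijection $v_i \leftrightarrow w_i$ (and analogously $B^v$ from $V(C_w)$). Since every edge incident on $V(C_v)$ goes either inside $C_v$ or to $X$ (similarly for $C_w$), the contribution of $C_v$'s edges to $u_G(X^*)$ is exactly $c_v(A \cup B_v)$ and can be altered independently of the rest of the cut by replacing $B_v$. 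Minimality of $X^*$ therefore forces $(A, B_v, B_w^v) \in \Scal_{c_v}$ and $(A, B_w, B_v^w) \in \Scal_{c_w}$; transporting the second membership across the bijection and applying the hypothesis $\Scal_{c_v} \supseteq \Scal_{c_w}$ yields $(A, B_w^v, B_v) \in \Scal_{c_v}$, i.e.~$c_v(A \cup B_w^v) \leq c_v(A \cup B_v)$. Combined with the first inequality this is an equality, so replacing $B_v$ by $B_w^v$ in $X^*$ produces a min-cut of identical capacity with aligned $C_v$- and $C_w$-parts (and it still separates $A'$ since only non-terminal vertices of $V(C_v)$ were modified).

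I expect the main obstacle to be the asymmetric nature of the hypothesis $\Scal_{c_v} \supseteq \Scal_{c_w}$: a two-sided agreement would be more natural, but the one-sided containment is in fact sufficient precisely because we only need to shift $B_v$ (in the component with the larger $\Scal$-set) to match $B_w$, not the converse. A secondary bookkeeping challenge is making the index correspondence $v_i \leftrightarrow w_i$ explicit when translating between $V(C_v)$ and $V(C_w)$; maintaining the superscript notation $B^v, B^w$ throughout should make this mechanical.
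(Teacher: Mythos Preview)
Your proposal is correct and is precisely the extension of \Cref{lem:basic:equiv-merge} that the paper has in mind (the paper does not spell out the argument, noting only that it is ``similar to \Cref{lem:basic:equiv-merge}''). Your key step---using minimality to get $(A,B_w,B_v^w)\in\Scal_{c_w}$, transporting it via the index bijection and the hypothesis $\Scal_{c_v}\supseteq\Scal_{c_w}$ to obtain $c_v(A\cup B_w^v)\le c_v(A\cup B_v)$, and hence an aligned min-cut---is exactly the right observation, and your remark that the one-sided containment suffices because only $B_v$ needs to be shifted is the crux of why the asymmetric hypothesis is enough.
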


This lemma allows us to show that any component can be decomposed into a conic
combination of basic components:

\begin{restatable}{lemma}{lemViDecomp}
\label{lem:vi:decomp}
Let $C$ be a component of $G-X$ of size $b$ and capacity vector $c$ for its edges. %

If  $c$ can be written as the conic combination of points $q_1, q_2, \ldots,
{q_{\ell} \in Q}$, all of which agree with $c$, %
then $G$ has a sparsifier given by $(G - C) \uplus C'_1 \uplus \ldots \uplus
C'_{\ell}$, where each $C'_i$ is a graph on $X \cup \set{v'_{i,1}, \ldots,
v'_{i,b}}$, and the capacities of the edges in $C'_i$ are given by $\lambda(q_i) \cdot q_i$.
\end{restatable}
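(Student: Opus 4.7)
The plan is to mirror the strategy of \Cref{lem:bipart:star-decomp}. Since $C$ is a connected component of $G-X$, the subgraphs $G-C$ and $C$ are Steiner-disjoint when we take $X$ as the terminal set. By \Cref{lem:basic:uplus}, it therefore suffices to prove that $H := C'_1 \uplus \ldots \uplus C'_\ell$ is a sparsifier for $C$ with terminal set $X$; the full claim then follows by gluing $G-C$ on both sides.

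To compare min-cuts for $A \subseteq X$, I would observe that every cut of $C$ separating $A$ has the form $A \cup Z$ with $Z \subseteq \set{v_1, \ldots, v_b}$, while every cut of $H$ has the form $A \cup Z_1 \cup \ldots \cup Z_\ell$ with $Z_i \subseteq \set{v'_{i,1}, \ldots, v'_{i,b}}$, since the $C'_i$ share only the vertices of $X$. By additivity of capacities across the Steiner-disjoint union, the capacity of the latter cut equals $\sum_i \lambda(q_i) \cdot q_i(A \cup Z_i)$, where each $Z_i$ is identified with the corresponding subset of $\set{v_1, \ldots, v_b}$ under the natural bijection $v'_{i,j} \leftrightarrow v_j$, and $q_i(\cdot)$ denotes the capacity of the induced cut in the basic component $C_{q_i}$.

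The crux of the argument is to show that the minimum over the choices $(Z_1, \ldots, Z_\ell)$ is attained at a single common $Z$. This is where the agreement hypothesis enters: if $Z^*$ is a min-cut of $C$ separating $A$, then $c(A \cup Z^*) \leq c(A \cup Z)$ for every $Z$, which is exactly the statement $(A, Z^*, Z) \in \Scal_c$. Since each $q_i$ agrees with $c$ (that is, $\Scal_{q_i} \supseteq \Scal_c$), we also have $q_i(A \cup Z^*) \leq q_i(A \cup Z)$ for every $Z$, so $Z^*$ is simultaneously a min-cut in every $C_{q_i}$. Choosing $Z_i = Z^*$ for all $i$ minimizes every term of the sum at once, giving
\begin{equation*}
\kappa_H(A) = \sum_{i=1}^{\ell} \lambda(q_i) \cdot q_i(A \cup Z^*) = c(A \cup Z^*) = \kappa_C(A).
\end{equation*}

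The main obstacle I anticipate is precisely the extra freedom in $H$ to cut each $C'_i$ asymmetrically, which a priori could yield a strictly cheaper cut than any single $Z$ achieves in $C$. The agreement hypothesis rules this out by forcing each $q_i$ to ``prefer'' the same family of cuts as $c$, and in particular the same min-cut $Z^*$. Once this observation is in place, the equality of min-cuts is immediate and the lemma follows by \Cref{lem:basic:uplus}.
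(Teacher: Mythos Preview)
Your argument is correct. You reduce via \Cref{lem:basic:uplus} to comparing $C$ with $H=C'_1\uplus\cdots\uplus C'_\ell$, and then observe directly that for any $A\subseteq X$ the agreement hypothesis forces every $q_i$ to attain its minimum at the same subset $Z^*\subseteq V(C)$ that minimizes $c(A\cup\,\cdot\,)$; since cut capacity is linear in the edge vector, summing gives $\kappa_H(A)=\kappa_C(A)$.

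The paper takes a slightly different route, mirroring its own proof of \Cref{lem:bipart:star-decomp}: it first states \Cref{lem:vi:merge} (the component analogue of \Cref{lem:basic:equiv-merge}) and then derives the present lemma by iteratively contracting $C'_{\ell-1}$ with $C'_\ell$, then the result with $C'_{\ell-2}$, and so on, until a single component with capacity $\sum_i\lambda(q_i)q_i=c$ is recovered. Each contraction is justified because all the partial sums still agree with $c$, so \Cref{lem:vi:merge} applies at every step. Your direct cut-by-cut comparison is more elementary and self-contained; the paper's version is more modular, since the merge lemma is reused later to collapse copies of the same basic component across different $C$. Both hinge on the same consequence of agreement, namely that a min-cut $Z^*$ for $c$ is simultaneously a min-cut for every $q_i$.
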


Finally, using \Cref{lem:vi:merge}, we can contract the different copies of
basic components we created, to obtain the sparsifier for $G$.
\end{proof}

\end{document}